%% file: main.tex
\documentclass[11pt,letterpaper]{article}

\usepackage[margin=1in]{geometry}
\usepackage{pgf}
\usepackage{contour}
\usepackage[format=hang]{caption}
\contourlength{0.85pt}

\input{commands}

\definecolor{briancolor}{rgb}{.2, 0, 1}

\author{Constantinos Daskalakis, Gabriele Farina, Brian Hu Zhang \\ MIT  \\ \texttt{\{costis,gfarina,zhangbh\}@mit.edu}}
\date{}

\let\cite\citep
\let\citet\citep

\title{Polynomial-time computation of $\ell_p$-contraction fixed points\\for even $p$}

\begin{document}

\begin{titlepage}
\maketitle

\begin{abstract}
We give a $\text{poly}(d, p, \log(1/\epsilon))$-time algorithm that computes an $\epsilon$-approximate fixed point of any $\ell_p$-nonexpansive map $f : \mathcal{X} \to \mathcal{X}$, where $\mathcal{X} \subset \R^d$ is a convex compact set and $p$ is an even integer. This is the first algorithm with $\poly(d, \log(1/\epsilon))$ runtime for any fixed $p \ne 2$. Our techniques are based on a computationally efficient version of Sion's theorem for non-compact minmax problems, and extend to more general total search problems that admit low-degree polynomial potentials.
\end{abstract}

\setcounter{tocdepth}{2}
{\small\tableofcontents}
\end{titlepage}

 \input{body}

\end{document}

%% file: commands.tex
\usepackage[shortlabels]{enumitem}
\usepackage[table]{xcolor}
\usepackage{mathtools, amsmath, xspace, amsthm, amssymb, physics2, bm, complexity, cancel, centernot} %
\PassOptionsToPackage{colorlinks, citecolor=teal, linkcolor=purple}{hyperref}
\usepackage{parskip, amsthm, amssymb, anyfontsize, nicefrac}
\usepackage{comment}

\usepackage{natbib}

\usepackage{hyperref}
\usepackage[capitalize]{cleveref}
\usepackage{autonum}

\usepackage{thmtools}

\makeatletter
\IfFormatAtLeastTF{2024-11-01}{
  \renewcommand*\@addtoreset[2]{%
    \bgroup
      \edef\aliasctr@@truelist{\aliasctr@follow{#2}}%
      \let\@elt\relax
      \expandafter\@cons\aliasctr@@truelist{{#1}}%
    \egroup
    \expandafter\xdef\csname theH#1\endcsname{%
      \expandafter\noexpand\csname theH#2\endcsname.%
      \noexpand\the\noexpand\value{#1}}%
  }
}{}
\makeatother

\AddToHook{cmd/appendix/before}{%
    \crefalias{section}{appendix}%
    \crefalias{subsection}{appendix}
}

\usephysicsmodule{qtext.legacy,ab}

\newcommand{\delimit}[3]{
    \NewDocumentCommand#1{sm}{
    \IfBooleanTF##1
    {\left#2 ##2 \right#3}
    {\mathopen{#2} ##2 \mathclose{#3}}%
    }
}

\delimit\ceil\lceil\rceil
\delimit\floor\lfloor\rfloor
\delimit\ip\langle\rangle
\delimit\norm\lVert\rVert
\delimit\abs\lvert\rvert

\let\op\operatorname
\let\vec\boldsymbol
\def\va{{\vec{a}}}
\def\vb{{\vec{b}}}

\def\vw{{\vec{w}}}
\def\vx{{\vec{x}}}
\def\vy{{\vec{y}}}
\def\vz{{\vec{z}}}

\let\mat\mathbf
\def\mA{{\mathbf{A}}}
\def\mB{{\mathbf{B}}}

\def\mM{{\mathbf{M}}}

\def\mP{{\mathbf{P}}}

\def\cG{{\mathcal{G}}}

\def\cQ{{\mathcal{Q}}}

\def\cX{{\mathcal{X}}}
\def\cY{{\mathcal{Y}}}

\renewcommand{\R}{\mathbb R}

\let\E\relax
\DeclareMathOperator*{\E}{\mathbb E}
\DeclareMathOperator*{\argmin}{argmin}

\newcommand{\ie}{{\em i.e.}\xspace}
\newcommand{\eg}{{\em e.g.}\xspace}

\newcommand{\supp}{\op{supp}}

\let\eps\epsilon
\let\grad\nabla

\usepackage{thmtools}

\newcounter{thmctr}
\numberwithin{thmctr}{section}

\newtheorem{theorem}[thmctr]{Theorem}

\newtheorem{lemma}[thmctr]{Lemma}
\newtheorem{proposition}[thmctr]{Proposition}
\newtheorem{fact}[thmctr]{Fact}

\theoremstyle{definition}
\newtheorem{definition}[thmctr]{Definition}
\newtheorem{remark}[thmctr]{Remark}

\makeatletter
\def\NAT@spacechar{~}%
\makeatother
\let\cite\citep

\newclass{\UEOPL}{UEOPL}
\newclass{\SSG}{SSG}
\newclass{\CLS}{CLS}

%% file: body.tex
\section{Introduction}

Given a metric space $(\cX, D)$ and a function $f : \cX \to \cX$, we say that $f$ is a {\em contraction} (under the metric $D$) if there exists a constant $\gamma > 0$ such that $D(f(\vx), f(\vy)) \le (1 - \gamma) D(\vx, \vy)$ for every $\vx, \vy \in \cX$. {\em Banach's fixed point theorem}~\cite{Banach22:Operations} states that every contraction map has a unique fixed point $\vx^* \in \cX$. 

An important family of computational problems concerns efficiently {\em computing} the unique fixed point, up to some approximation. More formally, we seek algorithms that compute an {\em approximate fixed point} $\vx$, that is, a point $\vx$ with $D(\vx, f(\vx)) \le \eps$ where $\eps$ is given as input. Banach's original proof \citep[p. 160]{Banach22:Operations} gives one algorithm for this problem: simply iterating $f$, it is not difficult to show that $D(f^T(\vx), f^{T+1}(\vx)) \le \eps$ after $T \lesssim  \log(\text{diam}(\cX)/\eps)/\gamma$ steps, where $\lesssim$ hides a constant factor. However, common applications of contractions are in settings where $\gamma$ can be exponentially small, where this simple algorithm will be inefficient. Two well-known instances of this are {\em simple stochastic games}~\cite{Condon92:Complexity} and the {\sc Arrival} problem~\cite{Haslebacher25:Arrival}, which reduce to $\ell_\infty$ and $\ell_1$-contraction on $[0, 1]^d$ respectively. %

The complexity of finding a fixed point of a contraction map is a classical and well-studied problem. The fully general problem (where $\cX$ is an arbitrary metric space) is known to be $\CLS$-complete~\cite{Daskalakis18:Converse}. The case of convex compact sets $\cX \subset \R^d$ and norms $D(\vx, \vy) = \norm{\vx - \vy}_p$ has a muddier picture. For $p=2$, it is by now well known that a polynomial-time algorithm (that is, polynomial in $d$, $\log(1/\eps)$, and $\log(1/\gamma)$ for a well-described set, \eg, $\cX = [0, 1]^d$) exists~\cite{Sikorski93:Ellipsoid}. For other values of $p$, the problem is known to lie in $\CLS$~\cite{Daskalakis11:Continuous}, but otherwise little is known about the existence of polynomial-time algorithms. {\em Query}-efficient algorithms are known~\cite{Chen25:Computing,Haslebacher25:Query} for all $p$, but these are not time-efficient, as they require expensive fixed-point computations after every query. The best-known {\em time} complexity for $\ell_p$ contractions are either inverse-polynomial in $\eps$ or $\gamma$ (\eg, the Banach iteration discussed above) or exponential in $d$ (\eg, $\log^{\Theta(d)}(1/\eps)$~\cite{Fearnley23:Complexity,Chen26:Quadratic} for the hypercube $\cX = [0, 1]^d$). The question of whether $\poly(d, \log(1/\eps), \log(1/\gamma))$-time algorithms exist has been open for all $p \ne 2$.

\subsection{Main result and technical overview}

Our main result is an efficient algorithm for $\ell_p$-contraction in the case that $p$ is an even integer.

\begin{theorem}[Main result, informal; formal statement in \Cref{th:contraction}]\label{th:main informal}
  For any even integer $p \ge 2$, there is a $\poly(d, p, \log(1/\eps))$-time algorithm for finding  an $\epsilon$-approximate fixed point of an $\ell_p$-contractive map $f : \cX \to \cX$ on a well-described convex compact set $\cX$ (no matter the value of $\gamma$).
\end{theorem}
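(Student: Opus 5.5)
The plan is to use a cutting-plane (ellipsoid) method whose separation oracle comes from a low-degree polynomial potential. The key identity is nonexpansiveness. For every $\vy \in \cX$ and a fixed point $\vx^*$,
\[
\norm{f(\vy) - \vx^*}_p^p \le \norm{\vy - \vx^*}_p^p .
\]
Because $p$ is even, $\norm{\vy-\vx}_p^p = \sum_i (y_i - x_i)^p$ is a polynomial of degree $p$ in $\vx$. Hence the set
\[
S_\vy = \{\vx : \norm{f(\vy)-\vx}_p^p \le \norm{\vy-\vx}_p^p\}
\]
is cut out by a polynomial inequality $g_\vy(\vx) \le 0$ with $\deg g_\vy \le p-1$, since the leading terms $x_i^p$ cancel. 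I will linearize: lift $\vx$ to its vector of monomials $\phi(\vx) = (x_i^k)_{i\le d,\,k\le p-1}$, which has dimension $d(p-1)$. Each $g_\vy$ is then \emph{linear} in $\phi(\vx)$, with coefficients $\sum_i \binom{p}{k}(-1)^{p-k}\big(f(\vy)_i^{p-k} - y_i^{p-k}\big)$. The fixed point's lift $\phi(\vx^*)$ therefore lies in the intersection of halfspaces $H_\vy$ for all $\vy$. This makes the problem a minmax problem
\[
\min_{\vx} \max_{\vy} g_\vy(\vx),
\]
to be solved in lifted coordinates, consistent with the abstract's mention of Sion's theorem.

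Concretely, I will run the ellipsoid method over $\vz$ in the convex hull of the moment curve $\cQ = \mathrm{conv}\,\phi(\cX)$. At each query $\vz$, I need a point $\vy \in \cX$ with $g_\vy(\vz) > 0$, i.e.\ a $\vy$ whose halfspace cuts off $\vz$. The natural candidate is a $\vy$ "decoded" from $\vz$. If $\vz = \phi(\vx)$ exactly, take $\vy = \vx$: then $g_\vx(\vx) = \norm{f(\vx)-\vx}_p^p > 0$ unless $\vx$ is approximately fixed. For a general $\vz$ (a pseudo-moment vector), I will instead take $\vy$ to be the mean $\E[\vx]$ of a distribution whose moments match $\vz$. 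I would then argue, via the convexity of $t \mapsto t^p$ (Jensen applied coordinatewise), that $\E_\vx[g_\vy(\vx)] \ge \norm{f(\vy)-\vy}_p^p - (\text{terms} \le 0)$. The sign works out because $(\vy - \vx)^p$ averaged over $\vx$ with mean $\vy$ is controlled. This Jensen step is exactly the minmax/Sion step: for each distribution over $\vx$ there exists $\vy$ beating it, and we need an efficient way to find that $\vy$.

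Once separation works, the volume argument runs in the standard way. If the query $\vy$ is never $\eps$-approximately fixed, each cut is valid. The set of lifted points near $\phi(\vx^*)$ has volume at least $(\eps/\poly)^{O(dp)}$ after a robustness argument: the inequalities are perturbed by $\eps$ slack, using Lipschitzness of $g_\vy$ in $\vx$ on the bounded set $\cX$. The ellipsoid method therefore terminates within $\poly(d,p,\log(1/\eps))$ iterations. The contraction parameter $\gamma$ never enters. For a well-described $\cX$ we need a membership/separation oracle for $\cQ$, or at least a relaxation of it. Since the monomials are univariate per coordinate, a relaxation by products of per-coordinate univariate moment cones (Hankel PSD constraints) is tractable. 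I will verify that the Jensen decoding only needs per-coordinate moments, which holds because $g_\vy$ is separable across coordinates.

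The main obstacle is the decoding step when $\vz$ is not the moment vector of a point in $\cX$. We must show that a suitable $\vy \in \cX$ exists and can be computed efficiently with $g_\vy(\vz) \gtrsim \norm{f(\vy)-\vy}_p^p$. This also has to survive the fact that $\vz$ lies only in a relaxation of $\cQ$ and that the mean must land in $\cX$. I would first try taking $\vy$ to be the first-moment coordinates of $\vz$, projected onto $\cX$. I would then prove the needed inequality coordinatewise using the Hankel PSD constraints, which encode nonnegativity of even central moments.
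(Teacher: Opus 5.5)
\textbf{There is a gap in the decoding step.} The Jensen inequality you need is false for every even $p\ge 4$. Write $\vy=\E_\nu\vx$, $c_i=f(\vy)_i-y_i$ and $Y_i=x_i-y_i$. Then
\[
\E_\nu\, g_\vy(\vx)=\sum_i\Big(c_i^p+\sum_{j=2}^{p-1}\binom{p}{j}(-1)^j c_i^{p-j}\,\E_\nu Y_i^j\Big).
\]
The even-$j$ terms are nonnegative. The odd-$j$ terms ($j=3,5,\dots$) pair odd powers of $c_i$ with odd central moments of $\nu$, so they have no sign. For $p=4$ a coordinate contributes $c^4+6c^2\sigma^2-4c\,\E Y^3$, which is negative for skewed $\nu$ and small $c$.

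Concretely, take $\cX=[0,1]$, $p=4$, the contraction $f(x)=x/2+0.15$, and $\nu=0.9\,\delta_0+0.1\,\delta_1$. Then $\vz=(0.1,0.1,0.1)$ is a genuine point of $\op{conv}\phi(\cX)$, not just a relaxation artifact. Mean-decoding gives $y=0.1$ and $f(y)=0.2$, so
\[
g_y(\vz)=0.9(0.2^4-0.1^4)+0.1(0.8^4-0.9^4)\approx-0.023<0,
\]
even though $|f(y)-y|=0.1$. At this center your oracle returns neither a cut nor an approximate fixed point, and the ellipsoid method cannot proceed. Your argument only works for $p=2$, the classical case, where no odd central moments of order $\ge 3$ appear.

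\textbf{The missing idea is to decode by minimizing the candidate potential, not by taking a mean.} Let $\vy$ be an (approximate) minimizer over $\cX$ of
\[
Q_\vz(\vy):=\sum_i\tilde{\mathbb{E}}_{\vz}(y_i-x_i)^p .
\]
The unknown $p$-th pseudo-moment only adds a constant, and your Hankel constraints make $Q_\vz$ convex. Since $f(\vy)\in\cX$, first-order optimality gives $\ip{\grad Q_\vz(\vy),f(\vy)-\vy}\ge 0$. The uniform convexity bound $(t+s)^p-t^p-pt^{p-1}s\ge c_ps^p$, with $c_p=2^{-O(p)}>0$, then yields
\[
g_\vy(\vz)=Q_\vz(f(\vy))-Q_\vz(\vy)\ge c_p\norm{f(\vy)-\vy}_p^p .
\]
The slack in that bound is a nonnegative univariate polynomial in $t$ of degree $p-2$, hence a sum of squares, so it survives pseudo-expectation. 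In the example, $y\approx 0.325$ does produce a cut.

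\textbf{Relation to the paper.} This step (solve the VI for the current potential, then test it at $f(\vx)$) is exactly what drives the paper's proof. There, ellipsoid against hope runs over all matrices $\mA$ for which $\vx\mapsto\mA m(\vx)$ is monotone. The semi-separation oracle is the monotone-VI solver: its solution $\vx$ satisfies $\ip{\mA m(\vx),\vx-f(\vx)}\le\eps$, and its monotonicity violations play the role of your Hankel cuts. The answer is then extracted from the EAH certificate $\mu$ via $\gamma\,\E_\mu\norm{\vx-\vx^*}_p^p\le\eps'$, after replacing $f$ by $f_\gamma$.

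With the corrected decoding, your direct feasibility search would be a valid variant that never needs $\gamma$. Your volume bound would also need repair. $\phi(\vx^*)$ is a rank-one boundary point of the Hankel relaxation, so no ball around it lies inside the relaxation. The positive-volume region must instead be a ball around the moments of a slightly smeared distribution near $\vx^*$, with all cuts applied with slack.
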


We first give an informal proof sketch of \Cref{th:main informal}.

We consider the following generalization of the fixed point problem: given a function $f : \cX \to \cX$ (maybe not an $\ell_p$-contraction), find a {\em distribution} on $\cX$ such that, for every $\hat\vx \in \cX$, we have
\begin{align}\label{eq:expanding}
  \E_{\vx\sim\mu} \norm{\vx - \hat\vx}_p^p \le \E_{\vx\sim\mu} \norm{f(\vx) - \hat\vx}_p^p.
\end{align}
If $f$ is in fact an $\ell_p$-contraction, taking $\hat\vx = \vx^*$ to be the fixed point of $f$, the only distribution $\mu$ satisfying \eqref{eq:expanding} is the point distribution $\mu = \delta(\vx^*)$. Thus, solving this more general problem also solves our specific problem of $\ell_p$-contraction.\footnote{In fact, if a {\em nontrivial} distribution $\mu$ satisfying \eqref{eq:expanding} exists, then it can be used to construct a certificate that $f$ is not actually contractive, and thus we can also solve the total version of the contraction problem. We will do this in \Cref{th:contr total}.} 
By convexity of the norm, and rewriting notation, it is sufficient to consider the zero-sum game
\begin{align}
  \min_{\mu \in \Delta(\cX)}\max_{G \in \cG}\E_{\vx\sim\mu} \ip{G(\vx), \vx - f(\vx)} \label{eq:minimax}
\end{align}
where
\begin{align}
  \cG = \{ \grad Q : Q \in \cQ \} \qq{and} \cQ = \{ \vx \mapsto \norm{\vx - \hat\vx}_p^p : \hat\vx \in \cX\}.
\end{align}
The reason for this notation change will become clear momentarily. 

If $f$ is a contraction, then the game \eqref{eq:minimax} has value exactly $0$: for the max-player, taking $Q^*(\vx) := \norm{\vx - \vx^*}_p^p$ to be the unique fixed point guarantees $\ip{\grad Q^*(\vx), \vx - f(\vx)} \ge Q^*(\vx) - Q^*(f(\vx)) \ge 0$ for every $\vx$, with equality if and only if $\vx = \vx^*$; and for the min-player taking $\mu = \delta(\vx^*)$ guarantees $\ip{G(\vx), \vx - f(\vx)} = 0$ for all $G$.

Our goal will be to find a min-player solution to \eqref{eq:minimax} that guarantees value $0$. To do so, we will apply the {\em ellipsoid against hope} (EAH) framework~\cite{Papadimitriou08:Computing,Farina24:Polynomial}. EAH was first introduced by \citet{Papadimitriou08:Computing} in the context of computing correlated equilibria in concise normal-form games. It has since been significantly generalized to also apply to other concepts of correlated equilibrium in games~\cite{Farina24:Polynomial,Zhang25:Learning,Anagnostides26:ComplexityCorrelated}, and has also found application to solving various problems related to {\em variational inequalities}~\cite{Zhang25:Expected,Anagnostides26:Polynomial}. It can be thought of as an efficient version of Sion's minimax theorem for when the dual problem admits an efficiently computable solution \citep{Farina26:Turning}.

Informally speaking, EAH allows us to circumvent the intractability of representing one player's strategy set (here, the fact that the min-player's strategy set is the set of distributions over an infinite set!) {\em if} we have the following properties:
\begin{enumerate}
  \item the {\em max}-player's strategy set is nicely behaved. Here, that means that $\cG$ is low-dimensional and convex, and
  \item we have a {\em semi-separation} oracle for $\cG$. That is, given some $G : \cX \to \R^d$, we need to either
  \begin{enumerate}
    \item {\em linearly separate} $G$ from $\cG$, {\em or}
    \item find an $\vx \in \cX$ ({or, more generally, a distribution $\mu \in \Delta(\cX)$, though we will not need that here}) that ``defeats'' $G$ in the sense that $\ip{G(\vx), \vx - f(\vx)} \le \eps$.
  \end{enumerate}
\end{enumerate}
Currently, we have neither of these properties; in particular, $\cG$ is nonconvex. To fix this, we will modify \eqref{eq:minimax} by {\em strengthening} the max-player, thus making the problem (weakly) harder, while still maintaining the property that the unique equilibrium for the min-player is $\mu = \delta(\vx^*)$. Consider instead what happens if we take $\cG$ to be the set of all {\em separable, monotone, degree-$(p-1)$} polynomial operators $G : \cX \to \R^d$, where {\em separable} means that $G(\vx)_i$ depends only on $x_i$, and {\em monotone} means
\begin{align}
  \ip{G(\vx) - G(\vy), \vx - \vy} \ge 0 \qq{for all} \vx, \vy \in \cX. \label{eq:monotone}
\end{align}
Our operator $G = \grad Q^*$ from before indeed satisfies both of these properties. Moreover, $\cG$ is now convex, because the constraints of the form \eqref{eq:monotone} are linear in $G$, and low-dimensional because we have restricted to polynomials of bounded degree and hence each $G$ can be represented by giving $O(dp)$ coefficients, one for each term $x_i^j$, for $1 \le i \le d$ and $0 \le j \le p-1$. (This is the only place where we use the assumption that $p$ is an even integer!) Therefore, we are done if we can exhibit a semi-separation oracle. To do this, we use the well-known fact that, for monotone operators $G$, there is a polynomial-time algorithm~(\eg, \cite{Nemirovski10:Accuracy}) that solves the variational inequality associated with $G$, that is, it provides a point $\vx \in \cX$ such that 
\begin{align} \label{eq:vi}
  \ip{G(\vx), \vx - \vy} \le \eps \qq{for all} \vy \in \cX,
\end{align}
and, in particular, taking $\vy = f(\vx)$, this solution $\vx$ indeed ``defeats'' $G$ in the required sense. Moreover, one can modify the algorithm so that, in the case that $G$ is not monotone, the algorithm can compute a certificate of nonmonotonicity, that is, a pair $(\vx, \vy)$ witnessing the violation of \eqref{eq:monotone}, which in turn can be used to linearly separate $G$ from the set of monotone operators. Thus, applying EAH completes the proof.

\begin{figure}[t]
\centering\scalebox{.75}{\input{3d_v3.pgf}}
\caption{A two-dimensional projected visualization  of our algorithm running on a particular $\ell_4$-contraction $f : [-1, 1]^3 \to [-1, 1]^3$, for 15 iterations. The points at which $f$ is queried are marked with blue dots and labeled with the iteration numbers. On each iteration, the shaded region represents the region in which it is still possible for the fixed point to be, projected into two dimensions. The true fixed point of $f$ is marked with an orange star and labeled ``fp''. More details can be found in \Cref{sec:figure details}. The vector field represents the field $f(x, y, z) - (x, y, z)$, restricted and projected to the plane $z = z^*$ where $z^*$ is the $z$-coordinate of the fixed point.}
\label{fig:intro}
\end{figure}

\subsection{Generalizations}
We now discuss various generalizations of our main result. First, since computing approximate fixed points of {\em nonexpansive} maps (``contractions'' with $\gamma = 0$) reduces to computing approximate fixed points of contractions (\Cref{fact:nonexp}), \Cref{th:main informal} also implies that one can compute fixed points of nonexpansive maps with essentially the same time complexity, and in fact the formal statement of \Cref{th:main informal} is stated for nonexpansive maps.

The next generalization follows, in fact, from an essentially identical proof to the above, and in the technical portion of the paper we will prove the below result first and prove \Cref{th:main informal} as an application of it.

\begin{definition}[$\cG$-expected fixed point]
    Let $\cG$ be a set of operators $G : \cX \to \R^d$. A distribution $\mu \in \Delta(\cX)$ is a {$\cG$-expected fixed point} (or EFP) of $f : \cX \to \cX$ if 
    \begin{align}
        \E_{\vx\sim\mu} \ip{G(\vx), \vx - f(\vx)} \le 0 \qq{for all} G \in \cG.
    \end{align}
\end{definition}

The notion of a $\cG$-EFP generalizes prior notions of expected fixed point such as those used by \citet{Zhang25:Learning} in the context of equilibrium computation; for example, taking $G$ to be the set of all constant functions implies $\E_{\vx\sim\mu} \vx = \E_{\vx\sim\mu} f(\vx)$, hence the name {\em expected fixed point}.

To discuss computation of $\cG$-EFPs, we must first discuss how to represent the set $\cG$. Let $m : \cX \to \R^k$ be a well-behaved (\ie, bounded, Lipschitz) map. We define the set $\cG(m)$ as the set of all well-behaved {\em monotone} operators $G : \cX \to \R^d$ of the form $\vx \mapsto \mA m(\vx)$.

Our result is that it is possible to efficiently compute $\cG$-expected fixed points:
\begin{theorem}[Informal version of \Cref{th:gefp}] \label{th:gefp informal}
    There is a $\poly(d, k, \log(1/\eps))$-time algorithm that, given $\cX$, $f : \cX \to \cX$, and $m : \cX \to \R^k$ via appropriate oracles, computes an $\eps$-approximate $\cG(m)$-expected fixed point of a given $f : \cX \to \cX$.
\end{theorem}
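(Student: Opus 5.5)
We run the ellipsoid-against-hope framework on the zero-sum game
\begin{align}
  \min_{\mu\in\Delta(\cX)}\max_{G\in\cG(m)} \E_{\vx\sim\mu}\ip{G(\vx), \vx - f(\vx)},
\end{align}
parametrizing the max-player by the matrix $\mA\in\R^{d\times k}$, so that $G_\mA(\vx)=\mA m(\vx)$. The parameter set
\begin{align}
  \mathcal{K}=\{\mA : G_\mA \text{ monotone and well-behaved}\}
\end{align}
is convex, because monotonicity is the intersection over pairs $(\vx,\vy)$ of the linear constraints $\ip{\mA(m(\vx)-m(\vy)),\vx-\vy}\ge 0$. We intersect $\mathcal{K}$ with a norm ball $\norm{\mA}_F\le 1$ (the problem is scale invariant) and, to have an $\eps$-margin, with the constraint that "hope" is violated. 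Concretely, suppose no $\eps$-EFP exists. Then by Sion/LP duality restricted to finite supports, there is $\mA\in\mathcal K$ with $\ip{\mA m(\vx),\vx-f(\vx)}\ge\eps$ for every $\vx\in\cX$. We run the ellipsoid method in the $dk$-dimensional space of $\mA$, searching for such an $\mA$.

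\textbf{Step 1 (semi-separation oracle).} Given a candidate center $\mA_t$, we run a polynomial-time monotone variational inequality solver (as in \cite{Nemirovski10:Accuracy}) on $G_{\mA_t}$, modified to output a certificate. It returns one of two things:
\begin{enumerate}
  \item[(a)] a point $\vx_t$ with $\ip{G_{\mA_t}(\vx_t),\vx_t-\vy}\le\eps$ for all $\vy\in\cX$. Taking $\vy=f(\vx_t)$ shows $\vx_t$ defeats $\mA_t$. The map $\mA\mapsto\ip{\mA m(\vx_t),\vx_t-f(\vx_t)}$ is linear, so the half-space where it is $\ge\eps$ gives the cut.
  \item[(b)] a pair $(\vx,\vy)$ with $\ip{\mA_t(m(\vx)-m(\vy)),\vx-\vy}<0$. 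This is a linear constraint separating $\mA_t$ from $\mathcal K$.
\end{enumerate}
The norm-ball and boundedness constraints are separated directly. To make the solver's guarantee meaningful when monotonicity is only approximately violated, we make the VI oracle robust: it either solves the VI or finds a pair violating monotonicity by at least $\eps$ in a suitably scaled sense, for example by tracking all queried points and checking pairwise monotonicity.

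\textbf{Step 2 (termination and extraction).} If the feasible region has volume below the threshold, the ellipsoid method ends after $\poly(d,k,\log(1/\eps))$ iterations. The usual argument then applies: the type-(a) cuts $\vx_1,\dots,\vx_T$, together with the type-(b) cuts, certify infeasibility of the hope set. We solve the polynomial-size LP over distributions supported on $\{\vx_t\}$ against the polyhedral relaxation of $\mathcal K$ cut out by the type-(b) constraints found. LP duality gives a $\mu$ with $\E_\mu\ip{\mA m(\vx),\vx-f(\vx)}\le O(\eps)$ for all $\mA$ in that relaxation, and hence for all $\mA$ in $\mathcal K\supseteq$ relaxation.

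\textbf{Main obstacle.} The volume argument is the hardest step: $\mathcal K$ may be lower-dimensional or thin, so the usual "infeasible region contains a ball" claim needs care. I would first handle this by shrinking the margin: if $\mA^*$ is strictly hopeful with margin $\eps$, then every $\mA$ in a small ball around $\mA^*$ has margin $\eps/2$ by Lipschitzness and boundedness of $m$ and $f$. Monotonicity, however, is not an open condition. I would fix this by relaxing $\mathcal K$ to $\eps'$-approximately monotone operators, and having the type-(b) cuts only report violations exceeding $\eps'$. The approximate version of the VI solver still returns an $O(\eps)$ solution for $\eps'$-monotone operators.
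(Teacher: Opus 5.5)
Your skeleton is exactly the paper's proof of \Cref{th:gefp}: EAH over the matrix parametrization $\mA$, a monotone-VI solver as the semi-separation oracle, VI solutions $\vx_t$ giving the cuts, and nonmonotonicity certificates giving linear separations.

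You diverge at your ``main obstacle.'' The paper never relaxes monotonicity. Under its standing assumption that $m(\vx)$ begins with $(1,\vx)$, the set $\cY(m,L)$ is already full-dimensional. The map $\vx\mapsto \vx+\mB m(\vx)$ is monotone whenever $\norm{\mB}_2\le 1/L$, so $\cY(m,L)$ contains a ball of radius $1/L$ around the identity. That is all the well-boundedness the black-box \Cref{th:eah} needs. The exact certificate of \Cref{th:monotone}, a pair with strictly negative inner product, then suffices as a separating hyperplane.

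Your relaxation to additively $\eps'$-approximately monotone $\mA$ is also sound, and it buys generality. It works for arbitrary bounded Lipschitz $m$, including feature maps without linear coordinates, where the exactly-monotone set can be genuinely lower-dimensional. The relaxed set always contains a ball of radius $\eps'/(4LR^2)$ around $\vec 0$. The cost is a quantitative version of \Cref{th:monotone}: either an $\eps'$-violating pair or an approximate VI solution. The paper's second proof of that theorem adapts to this, but with a square-root loss, so you need $\eps'\approx\eps^2/(LR^2)$ rather than $\eps'\approx\eps$.

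Two points to tighten in your extraction step.
\begin{itemize}
\item If the LP value were large, the dual witness $\mA^*$ would lie in the polyhedral relaxation $P$ cut out by the type-(b) constraints, possibly on its boundary. A ball around $\mA^*$ need not survive those cuts, and your margin-shrinking argument only treats hopeful points of $\mathcal K$. The surviving-ball contradiction should instead take convex combinations of $\mA^*$ with the ball contained in $P$: around $\vec 0$ for your relaxation, or around the identity in the paper's setting.
\item The inclusion is $\mathcal K\subseteq P$, not $\mathcal K\supseteq P$.
\end{itemize}
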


We say a function $f$ is {\em contractive under} a convex function $Q : \cX \to \R_{\ge 0}$ if $Q(f(\vx)) \le (1-\gamma) Q(\vx)$ for all $\vx$. Then \Cref{th:gefp informal} will, after proper formalization, also imply the following result:
\begin{theorem}[Informal version of \Cref{th:Q contraction}]\label{th:Q contraction informal}
    There is a $\poly(d, k, \log(1/\eps), \log(1/\gamma))$-time algorithm that, given $\cX$, $f : \cX \to \cX$, and $m : \cX \to \R^k$ via appropriate oracles, and promised that $f$ is $(1-\gamma)$-contractive under some $Q$ with $\grad Q \in \cG$, computes an $\vx \in \cX$ for which $Q(\vx) \le \eps$.
\end{theorem}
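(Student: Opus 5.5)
We derive \Cref{th:Q contraction informal} from \Cref{th:gefp informal}. We use the $\cG(m)$-EFP computation to localize the minimizer of $Q$, and then the certificate structure of the EAH algorithm to extract a single point. The key inequality is convexity of $Q$ combined with contraction. For the promised $Q$ with $G^* = \grad Q \in \cG(m)$ and every $\vx \in \cX$, we have
\begin{align}
  \ip{G^*(\vx), \vx - f(\vx)} \ge Q(\vx) - Q(f(\vx)) \ge \gamma Q(\vx) \ge 0.
\end{align}
Now let $\mu$ be an $\eps'$-approximate $\cG(m)$-EFP, meaning $\E_{\vx\sim\mu}\ip{G(\vx), \vx - f(\vx)} \le \eps'$ for all $G \in \cG(m)$. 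Taking $G = G^*$ gives $\gamma\,\E_{\vx\sim\mu} Q(\vx) \le \eps'$. Choosing $\eps' = \gamma\eps$ (or $\gamma\eps/2$ to absorb slack), the EFP algorithm runs in $\poly(d,k,\log(1/\eps),\log(1/\gamma))$ time and yields $\E_\mu Q \le \eps$.

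The remaining issue is that we do not know $Q$, so we cannot evaluate $Q$ on the support of $\mu$ to pick a good point. There are two ways to output a single point.

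\textbf{Option (a): averaging.} Output the barycenter $\bar\vx = \E_\mu \vx$. Convexity of $Q$ gives $Q(\bar\vx) \le \E_\mu Q(\vx) \le \eps$. This requires $\mu$ to have polynomial-size support, which is the case because EAH outputs a mixture of the polynomially many points produced by the semi-separation oracle.

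\textbf{Option (b): using the oracle's guarantee directly.} In the EAH run, each point $\vx_t$ returned by the oracle defeats some candidate $G_t$. If at some point the ellipsoid query is close to $G^*$, the VI solution $\vx$ satisfies $\ip{G^*(\vx), \vx - f(\vx)} \lesssim \eps'$, and hence $\gamma Q(\vx) \lesssim \eps'$ by the displayed inequality. Option (a) is cleaner, so I will use it.

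The main obstacle is the formalization behind ``appropriate oracles.'' First, $G^*$ must lie in the well-behaved class $\cG(m)$, with bounded coefficient norm and Lipschitz constant, so that the EAH guarantee, which is stated up to additive error depending on these bounds, applies to $G^*$. Second, the approximation error must be rescaled by those bounds, which contributes only logarithmic factors after setting $\eps' = \gamma\eps/\mathrm{poly}(\text{bounds})$.

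A further subtlety is that an EAH output $\mu$ is typically an approximate equilibrium only against all $G$ in a slightly enlarged or shrunk version of $\cG(m)$. I would handle this by invoking the formal \Cref{th:gefp} with a robustness margin. If that margin is stated relative to an inner ball, I would apply it to a slightly scaled copy $(1-\delta)G^* + \delta G_0$, where $G_0$ is an interior monotone operator, and bound the effect of $G_0$ using the boundedness of $\cX$.
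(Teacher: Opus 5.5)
Your proposal is correct and follows the paper's argument. The paper likewise plugs $\grad Q$ into the $\eps'$-approximate EFP guarantee, chains convexity and contraction to get $\gamma\,\E_{\vx\sim\mu} Q(\vx) \le \eps'$, and outputs the barycenter $\E_{\vx\sim\mu}\vx$ via Jensen (with $\eps' = \gamma\eps$ for the informal statement). Your robustness-margin fix is correct but unnecessary: the $\delta$-deep shrinkage is already handled inside the proof of \Cref{th:eah}, so \Cref{th:gefp} certifies the bound for every $G \in \cG(m,L)$ directly.
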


\Cref{th:main informal} then follows from instantiating \Cref{th:Q contraction informal} with an $\ell_p$-contraction $f$, $Q(\vx) = \norm{\vx - \vx^*}_p^p$ where $\vx^*$ is the unique fixed point of $f$, and
$$m(\vx) = (1, x_1, \dots, x_d, x_1^2, \dots, x_d^2, \dots, x_1^{p-1}, \dots, x_d^{p-1}).$$
 The above result also generalizes the algorithm for fixed points of unknown Mahalanobis norms contractions of \citet{Anagnostides26:ComplexityCorrelated}, which is the case where $m = (1, \vx) \in \R^{d+1}$ (so operators $G \in \cG$ are linear).

Our last ``generalization'' is a slightly stronger form of \Cref{th:gefp informal} that only works when $\cX = [0, 1]^d$ is the hypercube and $G$ is guaranteed to be {\em separable}, that is, $G(\vx)_i$ depends only on $x_i$, but on the other hand does not require that $\cG$ consist only of monotone operators. To this end, let $\tilde\cG(m)$, for a feature map $m : [0, 1] \to \R^k$, be the set of (well-behaved, but not necessarily monotone) operators of the form $G(\vx)_i = \ip{\va_i, m(x_i)}$ for $\mA \in \R^{d \times k}$.   Then we have:
\begin{theorem}[Informal version of \Cref{th:gefp2}]
    There is a $\poly(d, k, \log(1/\eps))$-time algorithm that, given $f : [0, 1]^d \to [0, 1]^d$, and $m : [0, 1] \to \R^k$ via appropriate oracles, outputs an $\eps$-approximate $\tilde\cG(m,L)$-expected fixed point of $f$.
\end{theorem}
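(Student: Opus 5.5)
We propose to follow the ellipsoid-against-hope (EAH) template used for \Cref{th:gefp informal}. The one new ingredient is the semi-separation oracle. Without monotonicity, the set $\tilde\cG(m)$ is simply the set of coefficient matrices $\mA \in \R^{d\times k}$, perhaps intersected with a norm ball to keep operators well-behaved. This set is convex, low-dimensional ($dk$ parameters), and trivially separable. So the whole difficulty is the ``defeat'' step: given a separable operator $G(\vx)_i = \ip{\va_i, m(x_i)}$, we must find a point $\vx \in [0,1]^d$, or a small-support distribution, with $\E\ip{G(\vx), \vx - f(\vx)} \le \eps$. Monotone VI solvers are no longer available, so we will exploit separability and one-dimensionality instead.

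\textbf{Defeat step.} Each coordinate $g_i(t) := \ip{\va_i, m(t)}$ is a Lipschitz function on $[0,1]$. We look for $\vx$ with $g_i(x_i)(x_i - y_i) \le \eps/d$ for every $y_i \in [0,1]$ and every $i$; summing then gives $\ip{G(\vx), \vx - f(\vx)} \le \eps$. This is a one-dimensional VI for each coordinate separately, and it is always solvable in time $\poly(\log(1/\eps))$ by bisection:
\begin{itemize}
\item if $g_i(0) \ge 0$, take $x_i = 0$, since then $g_i(0)(0 - y_i) \le 0$;
\item if $g_i(1) \le 0$, take $x_i = 1$;
\item otherwise $g_i$ changes sign from negative to positive on $[0,1]$, and bisection finds $x_i$ with $|g_i(x_i)| \le \eps/d$.
\end{itemize}
Continuity plus the intermediate value theorem suffices here; no monotonicity is needed. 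Note that the defeat does not even depend on $f$: the point $\vx$ defeats $G$ against every $\vy$, and in particular against $f(\vx)$. The only cost is the Lipschitz constant, which enters the bisection precision logarithmically through the bound on $\mA$ and on the Lipschitzness of $m$.

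\textbf{Feeding into EAH.} We run the ellipsoid method on the space of $\mA$ under the ``hope'' that some $G \in \tilde\cG$ has $\ip{G(\vx), \vx - f(\vx)} > \eps$ for every $\vx$. Each query $\mA$ either lies outside the bounded domain, in which case we separate directly, or is defeated by a point $\vx_t$ found as above. In the latter case $\mA \mapsto \ip{G_\mA(\vx_t), \vx_t - f(\vx_t)}$ is linear in $\mA$, so it yields a cut. After $\poly(d,k,\log(1/\eps))$ iterations the ellipsoid volume certifies that the hope is infeasible, up to the usual shrinking/margin argument. We then solve a polynomial-size LP over the collected points $\vx_1,\dots,\vx_T$ to find weights $\mu$ such that $\max_{G\in\tilde\cG} \E_\mu \ip{G(\vx),\vx - f(\vx)} \le \eps$. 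Here finite-dimensional LP duality applies because the max-player's set is a bounded convex set of matrices.

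\textbf{Main obstacle.} The delicate part is the standard EAH bookkeeping. The cuts are only approximate (slack $\eps$), and the infeasibility certificate must survive the bounded-norm restriction defining ``well-behaved''. Concretely, we need a volume lower bound for the set of $\mA$ that would achieve value $>2\eps$ if such existed, using Lipschitzness in $\mA$ with respect to the norm bound. Then the final LP dual must be shown to give a distribution rather than a signed measure; this holds because the constraint is of the form $\sum_t \lambda_t c_t(\mA) \le \eps$ with $\lambda \ge 0$ normalized. We would import these steps verbatim from the proof of \Cref{th:gefp}, replacing only its monotone-VI oracle by the coordinatewise bisection above.
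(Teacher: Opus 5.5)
Your proposal is correct and matches the paper's proof. The paper also runs ellipsoid against hope over the ball $\norm{\mA}_2 \le L$, so well-boundedness and the separation half of the semi-separation oracle are trivial, and it uses exactly your three-case coordinatewise bisection (\Cref{th:separable VI}) as the good-enough response; the margin and certificate bookkeeping you flag as the main obstacle is handled once and for all by the general EAH result (\Cref{th:eah}).
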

This result follows, essentially, because the variational inequality problem \eqref{eq:vi} is efficiently solvable when $\cX = [0, 1]^d$ and $G$ is separable, because it decomposes into $d$ independent one-dimensional VI problems, which are easy. It also gives alternative proofs and generalizations, in the obvious way, for \Cref{th:Q contraction informal} and \Cref{th:main informal}, in the special case of the hypercube. \Cref{fig:intro} shows the algorithm resulting from this analysis running on an $\ell_4$-contraction in $d=3$ dimensions.

Finally, we give a total version of \Cref{th:main informal}, in \Cref{th:contr total}. That is, in the event that $f : \cX \to \R^d$ is not a self-map or is expansive, \Cref{th:main informal} may fail, and \Cref{th:contr total} in this case outputs an explicit certificate of that failure, \ie, either a point $\vx$ with $f(\vx) \notin \cX$ or two points $\vx, \vy$ with $\norm{f(\vx) - f(\vy)}_p > \norm{\vx - \vy}_p $.

\section{Preliminaries}\label{sec:prelims}

\subsection{Notation and computational model}

$\norm{\cdot}_p$ denotes the $\ell_p$-norm on vectors and the entry-wise $\ell_p$-norm on matrices. $\Delta(\cX)$ denotes all the {\em finite-support} distributions on a set $\cX$. $B_d(\vx, r)$ is the ball of radius $r$ centered at $\vx$ in the $\ell_2$-norm in $d$ dimensions. Unless otherwise explicitly stated, ``$L$-Lipschitz'' is with respect to the $\ell_2$-norm. If $A$ and $B$ are computational problems, $A \le B$ means that there is a poly-time reduction from $A$ to $B$. The notation $\poly(a, b, \dots, \log(c, d, \dots))$ means a polynomial in $a, b, \dots, \log(c), \log(d), \dots$. The symbols $\lesssim$ and $\gtrsim$ hide absolute constants only.

We work in a black-box real-arithmetic model of computation. Basic arithmetic operations on real numbers are assumed to take unit time. To avoid numerical precision issues, we will only consider $\eps$-approximate versions of our problems of interest, for $\eps > 0$ given. 
Functions are given as evaluation oracles. Convex sets $\cX$ are assumed unless otherwise stated to be {\em well-described}, that is, 
\begin{enumerate}
  \item $\cX$ is {\em well-bounded}: $B_d(\vec x_0, 1/R) \subseteq \cX \subseteq B_d(\vec 0, R)$ where $R$ is given as input (but $\vx_0$ is not), and
  \item $\cX$ is given by an exact separation oracle, that is, an oracle that, given $\vz \in \R^d$, either declares that $\vz \in \cX$ or gives a vector $\va \in \R^d$ with $\ip{\va, \vx - \vz} > 0$ for every $\vx \in \cX$. 
\end{enumerate}
The results in this paper are not sensitive to these choices, and they are made only to ensure formality and cleanliness. For example, all the results in the paper hold, with standard minor modifications, in the standard finite-precision Turing model of computation or with approximate separation oracles.

We will use the parameter $L > 0$ to bound various numerical parameters of the problem, such as bounds on function outputs, Lipschitz constants, and so on. We will generally assume $L$ and $R$ are large and $\eps$ is small; for example, it is enough to take $L \gtrsim \sqrt{d}, R \ge 1$, and $\eps \le 1$.

Various problems throughout the paper will involve outputting {\em distributions}. In these settings, we will only work with distributions of finite support, represented as a list of support elements with their associated probabilities. Algorithms that output distributions will output them in this format.

\subsection{Contractions}

\begin{definition}
  Let $\cX \subseteq B_d(\vec 0, R)$ be a convex compact set and $f : \cX \to \cX$ be a function. We say that $f$ is a $(1-\gamma)$-contraction under a norm $\norm{\cdot} : \R^d \to \R_{\ge 0}$ if $$\norm{f(\vx) - f(\vy)} \le (1 - \gamma) \norm{\vx - \vy} \qq{for all} \vx, \vy \in \cX.$$
\end{definition}
If $f$ satisfies the above definition with $\gamma = 0$, we say that $f$ is {\em nonexpansive}, if $\gamma > 0$ then $f$ is {\em contractive}. {\em Banach's fixed point theorem} guarantees that every contractive $f : \cX \to \cX$ has a unique fixed point $\vx^* \in \cX$. An {\em $\eps$-approximate fixed point} is a point $\vx$ with $\norm{\vx - f(\vx)} \le \eps$.

\begin{remark}[Almost- and near-fixed points]
    One could also define a $\delta$-{\em near} fixed point as a point $\vx$ such that $\norm{\vx - \vx^*} \le \delta$. The two problems are polynomially equivalent for contraction maps. We state our results for $\eps$-approximate equilibria of nonexpansive maps, but the results would also apply to $\delta$-near fixed points of contractive maps, via these equivalences. For nonexpansive maps, however, near-fixed points are hard: there is not even a finite-query algorithm for computing a near-fixed point of an $\ell_\infty$-nonexpansive map. For more details on the relationship between the two notions of approximation, we refer the interested reader to \cite{Chen25:Computing}. 
\end{remark}

We start with the following straightforward fact:

\begin{fact}[\cite{Chen25:Computing}]\label{fact:nonexp}
  Let $f : \cX \to \cX$ be nonexpansive under $\norm{\cdot}$, $\eps > 0$, and let $\vx_0$ be an arbitrary point in $\cX$. Then $f_\gamma(\vx) := (1 - \gamma) f(\vx) + \gamma \vx_0$ is a $(1 - \gamma)$-contraction under $\norm{\cdot}$, and moreover every $\eps$-approximate fixed point of $f_\gamma$ is an $(\eps + \gamma \op{diam}_{\norm{\cdot}}(\cX))$-fixed point of $f$, where $\op{diam}_{\norm{\cdot}}$ is the diameter of $\cX$ under $\norm{\cdot}$.
\end{fact}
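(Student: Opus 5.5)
The argument splits into two short verifications: first that $f_\gamma$ is a $(1-\gamma)$-contraction and a self-map of $\cX$, and then that approximate fixed points transfer from $f_\gamma$ back to $f$.

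\emph{Contraction and self-map.} For any $\vx, \vy \in \cX$,
\begin{align}
  f_\gamma(\vx) - f_\gamma(\vy) = (1-\gamma)\bigl(f(\vx) - f(\vy)\bigr),
\end{align}
because the $\gamma\vx_0$ terms cancel. Absolute homogeneity of the norm and nonexpansiveness of $f$ then give
\begin{align}
  \norm{f_\gamma(\vx) - f_\gamma(\vy)} = (1-\gamma)\norm{f(\vx)-f(\vy)} \le (1-\gamma)\norm{\vx-\vy}.
\end{align}
The map $f_\gamma$ sends $\cX$ into $\cX$ because $f(\vx)\in\cX$, $\vx_0\in\cX$, $\gamma\in[0,1]$, and $\cX$ is convex. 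So $f_\gamma$ is a $(1-\gamma)$-contraction of $\cX$ into itself.

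\emph{Transfer of approximate fixed points.} Let $\vx$ satisfy $\norm{\vx - f_\gamma(\vx)} \le \eps$. Write
\begin{align}
  \vx - f(\vx) = \bigl(\vx - f_\gamma(\vx)\bigr) + \bigl(f_\gamma(\vx) - f(\vx)\bigr),
\end{align}
and note that $f_\gamma(\vx) - f(\vx) = \gamma(\vx_0 - f(\vx))$. By the triangle inequality,
\begin{align}
  \norm{\vx - f(\vx)} \le \eps + \gamma\norm{\vx_0 - f(\vx)}.
\end{align}
Both $\vx_0$ and $f(\vx)$ lie in $\cX$, so $\norm{\vx_0 - f(\vx)} \le \op{diam}_{\norm{\cdot}}(\cX)$. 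Hence $\norm{\vx - f(\vx)} \le \eps + \gamma\op{diam}_{\norm{\cdot}}(\cX)$, as claimed.

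The only point requiring care is the self-map property, which relies on convexity of $\cX$ and on $\gamma \in [0,1]$. The fact is implicitly stated for such $\gamma$, since otherwise $1-\gamma$ would not be a valid contraction factor. Everything else is routine. In applications one would choose $\gamma = \eps/\op{diam}(\cX)$, so that $\log(1/\gamma)$ remains polynomial in the input size.
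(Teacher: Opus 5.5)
Your proof is correct and follows the paper's argument essentially line for line: the $\gamma\vx_0$ terms cancel to give the contraction bound, and the triangle inequality through $f_\gamma(\vx)$, together with $\norm{\vx_0 - f(\vx)} \le \op{diam}_{\norm{\cdot}}(\cX)$, transfers approximate fixed points back to $f$. You also explicitly check that $f_\gamma$ maps $\cX$ into itself (by convexity) and keep the factor $(1-\gamma)$ on the right-hand side of the contraction inequality; the paper's written proof leaves the self-map property implicit, and its display drops that factor.
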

\begin{proof}
To see that $f_\gamma$ is a $(1-\gamma)$-contraction, note that
\begin{align}
    \norm{f_\gamma(\vx) - f_\gamma(\vy)} = (1 - \gamma)\norm{f(\vx) - f(\vy)} \le \norm{\vx - \vy}
\end{align}
for any $\vx, \vy \in \cX$.
For the second claim, if $\vx$ is an $\eps$-approximate fixed point, then 
\begin{align}
    \norm{f(\vx) - \vx} \le \norm{f_\gamma(\vx) - \vx} + \norm{f_\gamma(\vx) - f(\vx)} \le \eps + \gamma \norm{\vx_0 - f(\vx)} \le \eps + \gamma \op{diam}_{\norm{\cdot}}(\cX) \tag*\qedhere
\end{align}
\end{proof}

Thus, computing fixed points of nonexpansive maps reduces to computing fixed points of contractive maps.

\subsection{Ellipsoid against hope}

In this section, we introduce {\em ellipsoid against hope}, an algorithm first introduced by \citet{Papadimitriou08:Computing} for computing normal-form games  and later generalized by \citet{Farina24:Polynomial} to situations far beyond normal-form games. Here, we state and use an even more general version.

Consider an optimization problem of the form
\begin{align}
\qq{find} \mu \in \Delta(\cX) \qq{such that} \E_{\vx\sim\mu} \ip{F(\vx), \vy} \le \epsilon \qq{for all} \vy \in \cY. \tag{EAH-P} \label{eq:eah primal}
\end{align}
where $\cY \subset \R^\ell$ is a convex compact set, $\cX$ is arbitrary, and $\norm{F(\vx)}_2 \le L$ for every $\vx$. We are given an oracle that computes $F$. We assume that \eqref{eq:eah primal} is feasible with $\eps = 0$. It follows that the following dual program is infeasible:
\begin{align}
\qq{find} \vy \in \cY \qq{such that} \ip{F(\vx), \vy} > \eps \qq{for all} \vx \in \cX. \tag{EAH-D} \label{eq:eah dual}
\end{align}
That is, for every $\vy \in \R^\ell$, either $\vy \notin \cY$, or there is $\vx \in \cX$ with $\ip{F(\vx), \vy} \le \epsilon$. The goal of EAH is essentially to give a computational converse to this weak duality statement. That is, assuming only a computational proof of the above dual infeasibility statement, solve \eqref{eq:eah primal}. The computational proof of infeasibility is called a {\em semi-separation oracle}:
\begin{definition}
  A {\em semi-separation oracle} for \eqref{eq:eah primal} is an oracle that, given $\vy \in \R^\ell$ and $\epsilon > 0$, outputs either
  \begin{enumerate}
    \item (Good enough response) a distribution $\mu \in \Delta(\cX)$ with $\E_{\vx\sim\mu} \ip{F(\vx), \vy} \le \epsilon$, or
    \item (Separation) a direction $\va \in \R^\ell$ such that $\ip{\va, \vy' - \vy} > 0$ for all $\vy' \in \cY$. 
  \end{enumerate}
\end{definition}
The name ``good enough response'' is due to \citet{Farina24:Polynomial}, and comes from the fact that such a distribution is not a best response, $\argmin_{\vx \in \cX} \ip{F(\vx), \vy}$, but only ``good enough'' in the sense that it proves that $\vy$ does not achieve value higher than $\epsilon$. Moreover, notice that the semi-separation oracle does not need to necessarily know whether $\vy \in \cY$; it is valid for a semi-separation oracle to output a good-enough response even when $\vy \notin \cY$. Both these facts are crucial: indeed, it is possible for semi-separation oracles to exist even when the set $\cY$ has no efficient {\em separation} oracle and even when true best responses are hard to compute. Many applications of EAH, including ours and many of the papers previously cited, will make use of this distinction. 

The main result of EAH is the following.
\begin{theorem}[Ellipsoid against hope, generalized form of \citet{Farina24:Polynomial}]\label{th:eah} Suppose $\cY$ is well-bounded, and assume that a semi-separation oracle exists for \eqref{eq:eah primal}. Then a solution to \eqref{eq:eah primal} can be solved in time $\poly(\ell, \log(R, L, 1/\eps))$, and the solution can be ensured to have support at most $\ell+1$.
\end{theorem}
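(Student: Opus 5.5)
We run the central-cut ellipsoid method on the dual feasibility problem \eqref{eq:eah dual}, feeding every query point to the semi-separation oracle. We keep a list of every good-enough response the oracle returns. Once the ellipsoid is small enough, we argue that the responses collected so far, taken together, certify that the dual has no solution. From that certificate we extract a primal solution by solving one small linear program.

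\textbf{Step 1: running the ellipsoid.} Start from a ball $B_\ell(\vec 0, R)$ containing $\cY$. At each iteration, query the oracle at the current center $\vy_t$.
\begin{itemize}
\item If the oracle returns a separation direction $\va_t$, cut with the halfspace $\{\vy : \ip{\va_t, \vy - \vy_t} \ge 0\}$. This halfspace contains all of $\cY$.
\item If it returns a good-enough response $\mu_t$, cut with $\{\vy : \ip{\E_{\vx\sim\mu_t} F(\vx), \vy - \vy_t} \le 0\}$ and record $\mu_t$. This cut only discards points $\vy$ that $\mu_t$ still ``defeats'' at the level $\eps$, that is, points with $\E_{\mu_t}\ip{F(\vx),\vy} \le \eps$.
\end{itemize}
After $T = \poly(\ell, \log(R, L, 1/\eps))$ iterations, the ellipsoid volume falls below that of any ball of radius $\eps/(\text{poly}(L,R))$.

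\textbf{Step 2: an infeasibility certificate.} We claim that the system
\[
\vy \in \cY, \qquad \E_{\mu_t}\ip{F(\vx), \vy} \ge 2\eps \ \text{ for all recorded } t
\]
has no solution, possibly after a harmless rescaling of $\eps$. The argument goes by contradiction. Suppose $\vy^\star$ is such a point. Take a small neighbourhood around it; we need $\cY$ to contain a small ball, by well-boundedness, and we use that the map $\vy \mapsto \E_{\mu_t}\ip{F,\vy}$ is $L$-Lipschitz. That neighbourhood would then never have been cut, contradicting the volume bound.

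The main obstacle is making this robust. The region $\cY$ is only well-bounded, so the neighbourhood must be taken as the convex hull of $\vy^\star$ and the inner ball of $\cY$, shrunk toward $\vy^\star$. We also have to check that the constraint margin survives this shrinking. This is the standard Farina--Pipis style argument; it needs a margin proportional to $\eps$, a shrinkage of order $\eps/(LR)$, and volume comparisons polynomial in those quantities.

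\textbf{Step 3: recovering a primal solution.} The system in Step 2 is infeasible, and $\cY$ is convex compact. By the minimax theorem for a finite mixture over the $T$ recorded responses versus $\cY$, there are weights $\lambda \in \Delta([T])$ such that
\[
\sum_t \lambda_t \E_{\mu_t}\ip{F(\vx),\vy} \le 2\eps \qquad \text{for all } \vy \in \cY.
\]
We find such $\lambda$ with a second ellipsoid run, or as a convex program over $\lambda$, in dimension $T$. Running the ellipsoid over $\lambda$ with separation supplied by linear optimization over $\cY$ takes polynomial time. Alternatively, we can re-run EAH with the oracle restricted to the finite set of mixtures.

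The output is the mixture $\mu = \sum_t \lambda_t \mu_t$, which solves \eqref{eq:eah primal} with $2\eps$. Rescaling $\eps$ gives the stated accuracy.

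\textbf{Step 4: support size.} The vectors $\E_{\mu}F(\vx)$ live in $\R^\ell$. By Carathéodory's theorem, we can replace $\mu$ by a distribution with the same mean of $F$ and support at most $\ell+1$, computed by the usual linear-algebra elimination. The value $\E_\mu \ip{F(\vx),\vy}$ depends only on this mean, so the new distribution satisfies the same guarantee.
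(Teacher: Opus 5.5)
Your proposal has a genuine gap in Step 3: you compute the weights $\lambda$ via ``linear optimization over $\cY$'', or via a convex program whose objective $\lambda \mapsto \max_{\vy\in\cY}\ip{\sum_t \lambda_t \E_{\mu_t} F(\vx), \vy}$ is the support function of $\cY$. The hypotheses give you no such access. $\cY$ is only well-bounded and is reached solely through the semi-separation oracle, which may return a good-enough response and tell you nothing about $\cY$. The paper stresses that $\cY$ may have no efficient separation oracle, and hence, by the equivalence of separation and optimization, no linear-optimization oracle. This is exactly the case in \Cref{th:gefp}, where $\cY(m,L)$ is the set of coefficient matrices of monotone operators $\vx \mapsto \mA m(\vx)$. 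Your fallback of re-running EAH ``restricted to the finite set of mixtures'' is circular. Whenever no recorded $\mu_t$ is good enough at a query point, the restricted oracle would have to separate that point from $\cY$, which again needs a true separation oracle; the original oracle might instead return a new, unrecorded response. The existence of $\lambda$ via minimax is fine; computing it is the missing piece. (There is also a sign slip in Step 1: in the good-enough case the halfspace to keep is $\{\vy : \ip{\E_{\mu_t} F(\vx), \vy - \vy_t} \ge 0\}$, as your next sentence indicates.)

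The missing idea is to record the separation cuts $(\va_s, \vy_s)$ as well and use them as an explicit outer approximation
$P := \{\vy \in [-R,R]^\ell : \ip{\va_s, \vy - \vy_s} \ge 0 \text{ for all } s\} \supseteq \cY$, starting the ellipsoid from a ball containing the box. Your Step 2 argument then goes through verbatim with $P$ in place of $\cY$. $P$ is convex and contains the inner ball of $\cY$, and no point of $P$ is removed by a separation cut. So the shrunken neighbourhood of any $\vy^\star \in P$ with $\E_{\mu_t}\ip{F(\vx),\vy^\star} \ge 2\eps$ for all $t$ survives every cut, contradicting the volume bound. The minimax step is now over the explicit polytope $P$. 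By LP duality, $\min_{\lambda \in \Delta([T])} \max_{\vy \in P} \sum_t \lambda_t \E_{\mu_t}\ip{F(\vx), \vy}$ is a single polynomial-size LP, and its optimal $\lambda$ satisfies the bound on all of $P \supseteq \cY$. This is the ``certificate of infeasibility'' the paper's terse proof refers to. Your handling of robustness (threshold $2\eps$ plus shrinking toward the inner ball) is otherwise a fine substitute for the paper's use of the deep set $B_\ell(\cY,-\delta)$ followed by a convex-ball argument, and Step 4 matches the paper.
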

For completeness, we include a proof in \Cref{sec:proofs prelims}.

\subsection{Variational inequalities}

In this subsection, we give required background on efficient algorithms for {\em variational inequality problems}.
\begin{definition}
  The $\eps$-approximate {\em variational inequality} (VI) problem is the following: given feasible set $\cX$ and an operator $G : \cX \to \R^d$, find a point $\vx \in \cX$ with
  \begin{align}
  \ip{G(\vx), \vx - \vy} \le \eps \qq{for all} \vy \in \cX.
  \end{align}
\end{definition}
We will primarily be interested in {\em monotone} VIs, that is, VIs where the operator satisfies \eqref{eq:monotone}. The following essentially known result about the complexity of monotone VIs will be crucial in our analysis.
\begin{theorem}[Complexity of monotone variational inequalities]\label{th:monotone}
  There is a $\poly(d, \log(R, L, 1/\eps))$-time algorithm that, given parameters $R, L,\eps > 0$, convex compact set $\cX$, and $L$-Lipschitz operator $G : \cX \to B_d(\vec 0, L)$, outputs {\em either}
  \begin{enumerate}
    \item an $\eps$-approximate VI solution, that is, a point $\vx \in \cX$ with  $\ip{G(\vx), \vx - \vy} \le \eps$ for all $\vy \in \cX$, {\em or}
    \item a certificate of nonmonotonicity, that is, two points $\vx, \vy \in \cX$ with $\ip{G(\vx) - G(\vy), \vx - \vy} < 0$.
  \end{enumerate}
\end{theorem}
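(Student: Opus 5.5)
We prove \Cref{th:monotone} by running a central-cut ellipsoid method, the same one used for monotone VIs, and using the duality between strong and weak VI solutions. The point is to notice that monotonicity is only used at the finitely many pairs of points the algorithm actually touches. Recall the \emph{weak} (Minty) VI: find $\vx \in \cX$ with $\ip{G(\vy), \vx - \vy} \le \eps$ for all $\vy \in \cX$.

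\textbf{Step 1 (cutting planes).} We run the ellipsoid method over $\cX$ using the separation oracle for $\cX$. At each query point $\vy_t$ that lies in $\cX$, we cut with the halfspace $\{\vx : \ip{G(\vy_t), \vx - \vy_t} \le 0\}$; at infeasible points we use the oracle's separating direction. After $T = \poly(d, \log(R, L, 1/\eps))$ iterations, the volume of the current ellipsoid falls below that of a ball of radius $r \approx \eps/(LR)$ scaled appropriately. The standard argument then applies: for every $\vx \in \cX$ there is some feasible query $\vy_t$ with $\ip{G(\vy_t), \vx - \vy_t} \ge -\eps'$. Otherwise a small shrunk copy $(1-\lambda)\vx + \lambda \cX$, which contains a ball by well-boundedness, would survive every cut.

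\textbf{Step 2 (extract a candidate).} We take the queried points $\vy_1, \dots, \vy_T \in \cX$ and their values $G(\vy_t)$, and solve the finite convex--concave problem $\min_{\vx \in \cX} \max_{\lambda \in \Delta_T} \sum_t \lambda_t \ip{G(\vy_t), \vx - \vy_t}$. The objective is bilinear in $(\vx, \lambda)$ and the dimension is polynomial, so this is solvable to accuracy $\eps$ in polynomial time by the ellipsoid method or by a convex program over $\cX$. This gives a point $\bar\vx = \sum_t \lambda_t \vy_t$ (or the minimizer) with $\ip{G(\vy_t), \bar\vx - \vy_t} \le \eps$ for all $t$. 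Equivalently, we may take $\bar\vx$ as the $\lambda$-weighted average of the queries, where $\lambda$ is the dual certificate.

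\textbf{Step 3 (test, or certify).} We evaluate $G(\bar\vx)$ and run a linear minimization over $\cX$ to compute $\max_{\vy \in \cX} \ip{G(\bar\vx), \bar\vx - \vy}$, which is a convex program handled by the separation oracle. If this value is at most $\eps$, we output $\bar\vx$. Otherwise we look for a nonmonotone pair among $\{(\bar\vx, \vy_t)\} \cup \{(\vy_s, \vy_t)\}$ and output one if it exists. The crux is to show that if every such pair is monotone, then $\bar\vx$ already passes the test. Writing $\bar\vx = \sum_t \lambda_t \vy_t$ and assuming $\ip{G(\vy_t) - G(\bar\vx), \vy_t - \bar\vx} \ge 0$ for all $t$, we get
\[ \ip{G(\bar\vx), \bar\vx - \vy} \le \textstyle\sum_t \lambda_t \ip{G(\vy_t), \vy_t - \vy} + \text{(terms involving monotonicity on touched pairs)}. \]
Step 1 controls the right-hand side for every $\vy$ via an $L$-Lipschitz rounding argument. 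This is the usual ``weak solution plus monotone plus continuous implies strong solution'' argument, made finitary. To pass from a weak solution to a strong one, we probe along the segment $\vz_s = \bar\vx + s(\vy^* - \bar\vx)$, where $\vy^*$ is the violating maximizer from the test. For small $s \sim \eps/(LR)$, Lipschitzness gives $\ip{G(\vz_s), \vy^* - \bar\vx} \approx \ip{G(\bar\vx), \vy^* - \bar\vx} < -\eps$. Combined with the weak-solution property at $\vz_s$, obtained by adding $\vz_s$ as an extra cut and query, this forces a violation $\ip{G(\vz_s) - G(\vy_t), \vz_s - \vy_t} < 0$ for some explicit $t$, which we find by checking all $t$.

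The main obstacle is Step 3: making the weak-to-strong conversion constructive. It must produce an explicit violating pair among polynomially many computed points, rather than rely on an existential appeal to monotonicity over all of $\cX$. Concretely, I would first try to prove that $\vz_s$ itself satisfies the weak-solution inequality against all queries, which requires rerunning Steps 1--2 with $\vz_s$ included, and then choose the parameters $s$, $\eps'$, and $T$ so that the Lipschitz error terms are dominated by the gap of size $\eps$.
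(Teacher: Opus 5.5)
Your overall architecture is the right one, and once completed it is exactly the paper's second proof: cutting planes with cuts $\ip{G(\vy_t),\vx-\vy_t}\le 0$, then a $\lambda$-averaged point, then a probe $\vz_s$ on the segment toward the violating $\vy^*$ that is checked for monotonicity against the queries. In the paper's language, EAH on the expected-VI problem with good-enough response $\vx=\vy$ is your Step~1, and its output $\mu$ is your $\lambda$. But you never establish the correctness of Step~3, which you yourself flag as the main obstacle, and neither route you offer works.

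\textbf{The stated crux is false.} You claim that monotonicity on the pairs among $\{\bar\vx,\vy_1,\dots,\vy_T\}$ forces $\bar\vx$ to pass the test. But monotonicity of $(\bar\vx,\vy_t)$ only controls the direction $\bar\vx-\vy_t$. It does not control the error term $\sum_t\lambda_t\ip{G(\bar\vx)-G(\vy_t),\vy_t-\vy}$. Concretely, take
\begin{itemize}
\item $\cX=[-1,1]^2$ and $G(\vx)=(0,\min\{1,1+2x_1\})$, which is $2$-Lipschitz;
\item queries $\vy_{1,2}=(\pm1,0)$, weights $\lambda=(\tfrac12,\tfrac12)$, and $\bar\vx=\vec 0$.
\end{itemize}
Then $\sum_t\lambda_t\ip{G(\vy_t),\vy_t-\vx}=0$ for every $\vx\in\cX$, and all three touched pairs have monotonicity product exactly $0$. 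Yet $\ip{G(\bar\vx),\bar\vx-\vy}=1$ at $\vy=(0,-1)$.

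\textbf{The proposed fix does not work either.} Making $\vz_s$ an extra query and rerunning Steps~1--2 is circular: the rerun changes $\lambda$ and $\bar\vx$, hence $\vy^*$ and $\vz_s$. Moreover, the property it aims at, namely $\vz_s$ being weakly good against the queries, is not the one needed.

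\textbf{What is missing.} The needed fact is that $\bar\vx$ is weakly good against the test point $\vz_s$, and this comes for free from the certificate, provided Step~2 is stated correctly.
\begin{itemize}
\item Step~1 plus minimax over $\lambda\in\Delta_T$ yields $\sum_t\lambda_t\ip{G(\vy_t),\vy_t-\vx}\le\eps'$ for \emph{all} $\vx\in\cX$. It does not yield $\ip{G(\vy_t),\bar\vx-\vy_t}\le\eps$ for all $t$; that is a different statement, and it is not what is needed.
\item The point must be $\bar\vx:=\sum_t\lambda_t\vy_t$. It is not interchangeable with the $\vx$-minimizer of your min--max problem.
\end{itemize}
With this, set $\vz_s=\bar\vx+s(\vy^*-\bar\vx)\in\cX$. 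If no pair $(\vz_s,\vy_t)$ is a violation, then, using $\bar\vx-\vz_s=s(\bar\vx-\vy^*)$ and the certificate at $\vx=\vz_s$,
\[
s\ip{G(\vz_s),\bar\vx-\vy^*}=\sum_t\lambda_t\ip{G(\vz_s),\vy_t-\vz_s}\le\sum_t\lambda_t\ip{G(\vy_t),\vy_t-\vz_s}\le\eps'.
\]
Lipschitzness then gives $\ip{G(\bar\vx),\bar\vx-\vy^*}\le\eps'/s+4sLR^2$.

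So no reruns or extra cuts are needed; you only check the $T$ pairs $(\vz_s,\vy_t)$. The parameters must be chosen accordingly. The step must scale like $\eps/(LR^2)$, not $\eps/(LR)$, and the inner accuracy must be quadratically finer, $\eps'\lesssim\eps^2/(LR^2)$.
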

For completeness, two different proofs are given in \Cref{sec:proofs prelims}.
The fact that $G$ is not required to be monotone in the above theorem will be crucial. Indeed, we will later invoke the above result with operators $G$ for which we do not know whether or not they are monotone, where the above result will be used as a semi-separation oracle!

\section{Generalized expected fixed points and main result}\label{sec:main}

We will start by defining a generalization of the norm contraction fixed point problem. 

\begin{definition}[$\cG$-expected fixed point]
    Let $\cG$ be a set of operators $G : \cX \to \R^d$. A distribution $\mu \in \Delta(\cX)$ is an {\em  $\eps$-approximate $\cG$-expected fixed point} (or EFP) of $f$ if 
    \begin{align}
        \E_{\vx\sim\mu} \ip{G(\vx), \vx - f(\vx)} \le \eps \qq{for all} G \in \cG.
    \end{align}
\end{definition}

We will phrase our next result in terms of $\cG$-expected fixed points. To do this, we must first discuss how to represent the set $\cG$. Let $m : \cX \to B_k(\vec 0, L)$ be an arbitrary $L$-Lipschitz map. For cleanliness, we will assume $L$ is sufficiently large (say, $L \gtrsim \sqrt{d}$ is enough) that $k \ge d+1$, and the first $d+1$ coordinates of $m(\vx)$ are the constant $1$ followed by $\vx$ itself. We define the set $\cG(m,L)$ as the set of all {\em monotone} operators $G : \cX \to \R^d$ of the form $\vx \mapsto \mA m(\vx)$, where $\norm{\mA}_2 \le L$. For notational simplicity, we also define $\cY(m,L)$ to be the set of all matrices $\mA$ with $\norm{\mA}_2 \le L$ such that $\vx \mapsto \mA m(\vx)$ is monotone, so that we have
\begin{align}
  \cG(m,L) = \{ \vx \mapsto \mA m(\vx) : \mA \in \cY(m,L) \}.
\end{align}
We now discuss how to represent $\ell_p$-contractions in this language.

\begin{proposition}\label{prop:m is nice}
  Let $\cX \subseteq B_d(\vec 0, R)$, and $p$ be a positive even integer. Let  $m : \cX \to \R^k$ where $k = d(p-1) + 1$ be given by $$m(\vx) = (1, x_1, \dots, x_d, x_1^2, \dots, x_d^2, \dots, x_1^{p-1}, \dots, x_d^{p-1}).$$ Then for $L \lesssim \poly(d, p) \cdot (2R)^p$, we have that $m$ is $L$-Lipschitz and $L$-bounded, and moreover for every $\hat\vx \in \cX$ we have $\grad Q^{\hat\vx} \in \cG(m,L)$, where $Q^{\hat\vx}(\vx) := \norm{\vx - \hat\vx}_p^p$.
\end{proposition}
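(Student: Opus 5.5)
The plan is to verify three things in turn: that $m$ is $L$-bounded, that $m$ is $L$-Lipschitz, and that each $\grad Q^{\hat\vx}$ can be written as $\mA m(\vx)$ with $\norm{\mA}_2 \le L$ and is monotone.

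\emph{Boundedness of $m$.} Every $\vx \in \cX$ satisfies $|x_i| \le R$. Hence each coordinate of $m(\vx)$ has absolute value at most $\max(1,R)^{p-1}$. This gives $\norm{m(\vx)}_2 \le \sqrt{k}\,\max(1,R)^{p-1}$, and since $R \ge 1$ that is $\poly(d,p)(2R)^p$.

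\emph{Lipschitzness of $m$.} The map $m$ is separable in the sense that each coordinate depends on a single $x_i$. The derivative of $t \mapsto t^j$ on $[-R,R]$ is at most $jR^{j-1} \le pR^{p}$. The Jacobian of $m$ has, for each $i$, a column with at most $p-1$ nonzero entries, all in distinct rows across different $i$. Its operator norm is therefore at most the maximal column norm, which is $\le \sqrt{p}\,pR^{p}$. This is within the claimed $L$.

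\emph{Representation and monotonicity of $\grad Q^{\hat\vx}$.} Since $p$ is even, $Q^{\hat\vx}(\vx) = \sum_i (x_i - \hat x_i)^p$ is a genuine polynomial. Its gradient is
\begin{align}
  \grad Q^{\hat\vx}(\vx)_i = p(x_i-\hat x_i)^{p-1} = \sum_{j=0}^{p-1} p\binom{p-1}{j}(-\hat x_i)^{p-1-j} x_i^j .
\end{align}
This is linear in the features of $m$: row $i$ of $\mA$ puts the coefficient $p\binom{p-1}{j}(-\hat x_i)^{p-1-j}$ on the feature $x_i^j$, where $j=0$ refers to the constant $1$. Each coefficient has absolute value at most $p\,2^{p-1}R^{p-1}$. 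The Frobenius norm, which is the entrywise $\norm{\cdot}_2$ used in the paper, is then at most $\sqrt{dp}\,p\,2^{p-1}R^{p-1} \le \poly(d,p)(2R)^p$.

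Monotonicity follows because $Q^{\hat\vx}$ is convex: $t\mapsto t^p$ is convex for even $p$, and a sum of convex functions is convex. The gradient of a differentiable convex function is monotone, since adding the two first-order convexity inequalities $Q(\vy)\ge Q(\vx)+\ip{\grad Q(\vx),\vy-\vx}$ and the symmetric one gives \eqref{eq:monotone}. Hence $\mA \in \cY(m,L)$, and so $\grad Q^{\hat\vx} \in \cG(m,L)$.

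No step is a real obstacle. The only care needed is bookkeeping the constants in the coefficient bound, which uses $\sum_j\binom{p-1}{j} = 2^{p-1}$ and is the source of the $(2R)^p$ factor. One also has to remember that the paper's convention places $1$ and $\vx$ as the first $d+1$ coordinates, which the given $m$ satisfies.
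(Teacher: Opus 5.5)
Your proposal is correct and follows essentially the same route as the paper's proof. It bounds $m$ entrywise, gets Lipschitzness from derivative bounds, uses a binomial expansion to bound the coefficients of $\grad Q^{\hat\vx}$ in the feature basis, and obtains monotonicity from convexity of $Q^{\hat\vx}$. Your disjoint-column-support argument for the Jacobian gives a slightly sharper Lipschitz constant ($\sqrt{p}\,pR^p$ versus the paper's $\sqrt{k}\,pR^p$), but this is cosmetic. Your reliance on $R \ge 1$ matches the paper's standing convention.
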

The proof is straightforward, and deferred to \Cref{sec:proofs main}, together with a version that also includes cross-terms which will be useful later. Crucially, the function $Q$ above is a {\em polynomial} when $p$ is an even integer---this is the reason why our techniques work for even $p$ and not for other values of $p$.

\begin{theorem}\label{th:gefp}
  There is a $\poly(d, k, \log(R, L, 1/\eps))$-time algorithm that, given parameters $R, L, \eps > 0$, well-described $\cX$, function $f : \cX \to \cX$, and $L$-Lipschitz feature map $m : \cX \to B_k(\vec 0, L)$, outputs an $\eps$-approximate $\cG(m,L)$-expected fixed point of $f$, of support at most $kd + 1$.
\end{theorem}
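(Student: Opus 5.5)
We will apply the ellipsoid-against-hope framework (\Cref{th:eah}) with $\cY = \cY(m,L)$ as the max-player's set, viewed as a subset of $\R^\ell$ with $\ell = dk$ (the entries of $\mA$). For each $\vx \in \cX$, set $F(\vx) := (\vx - f(\vx))\, m(\vx)^\top \in \R^{d\times k}$. Then
\[
\ip{F(\vx), \mA} = \ip{\mA m(\vx), \vx - f(\vx)} .
\]
So a solution of \eqref{eq:eah primal} is exactly an $\eps$-approximate $\cG(m,L)$-EFP, and the support bound $\ell+1 = kd+1$ is the one stated. The norm bound $\norm{F(\vx)}_2 \le 2R L$ follows from $\cX \subseteq B_d(\vec 0,R)$ and $\norm{m(\vx)}_2 \le L$.

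\textbf{Hypotheses of \Cref{th:eah}.} First, $\cY$ is convex: the norm ball is convex, and for each fixed pair $\vx,\vy$ the monotonicity constraint
\[
\ip{\mA(m(\vx)-m(\vy)), \vx-\vy} \ge 0
\]
is linear in $\mA$. Second, $\cY$ is compact. Third, $\cY$ is well-bounded: it contains a small ball. Since the first $d+1$ features are $1$ and $\vx$, the matrix $\mA_0$ realizing $G(\vx) = c\,\vx$ (with $c$ of order $L/2$) is strongly monotone. Any perturbation $\mA_0+\mat{E}$ with $\norm{\mat{E}}_2 \le r$ satisfies
\[
\ip{\mat E(m(\vx)-m(\vy)), \vx-\vy} \ge -rL\norm{\vx-\vy}^2
\]
by $L$-Lipschitzness of $m$, so it stays monotone once $r \lesssim c/L$. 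Hence a ball of radius $1/\poly(L)$ lies in $\cY$. Fourth, feasibility with $\eps=0$: as a sanity check, by Brouwer $f$ has a fixed point $\vx^*$, and $\mu=\delta(\vx^*)$ gives value $0$ against every $\mA$. (If one wants EAH in a form that does not need feasibility, it is only used to argue dual infeasibility, which is what the oracle below certifies anyway.)

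\textbf{Semi-separation oracle.} This is the heart of the proof. Given $\mA \in \R^{d\times k}$:
\begin{enumerate}
\item If $\norm{\mA}_2 > L$, output the separating direction $-\mA$; this works because $\ip{-\mA, \mA'-\mA} > 0$ for every $\mA'$ with $\norm{\mA'}_2 \le L$ (strictly, since $\norm{\mA'}_2 \le L < \norm{\mA}_2$).
\item Otherwise, $G(\vx) = \mA m(\vx)$ is $L^2$-Lipschitz and $L^2$-bounded. Run the algorithm of \Cref{th:monotone} on $G$ with these parameters.
\item If it returns an $\eps$-VI solution $\vx$, then taking $\vy=f(\vx)\in\cX$ gives $\ip{G(\vx),\vx-f(\vx)} \le \eps$. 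So $\delta(\vx)$ is a good-enough response.
\item If it returns a nonmonotonicity certificate $(\vx,\vy)$, the linear functional
\[
\mA' \mapsto \ip{\mA'(m(\vx)-m(\vy)), \vx-\vy}
\]
is $\ge 0$ on $\cY$ but $<0$ at $\mA$. Hence $\va = (\vx-\vy)(m(\vx)-m(\vy))^\top$ satisfies $\ip{\va, \mA'-\mA} > 0$ for all $\mA'\in\cY$, which is a valid separation.
\end{enumerate}

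All running times are $\poly(d,k,\log(R,L,1/\eps))$, so \Cref{th:eah} yields the claim.

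\textbf{Main obstacle.} I expect the delicate step to be the well-boundedness of $\cY$ and matching the exact form of \Cref{th:eah}. Concretely, this means checking that the inner-ball radius and the outer radius $L$ are polynomially controlled in the given parameters, and that the Lipschitz and boundedness constants fed to \Cref{th:monotone} (roughly $L^2$ instead of $L$) only affect logarithmic terms. The approach is to exhibit the explicit strongly monotone center $\mA_0$ as above. If needed, one can instead slightly shrink the constraint $\norm{\mA}_2 \le L$ so that the center and its ball fit strictly inside.
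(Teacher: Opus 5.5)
Your proposal is correct and follows the paper's proof. Both run EAH over $\cY(m,L)$, get well-boundedness from a ball around a strongly monotone multiple of the identity, and use \Cref{th:monotone} as the semi-separation oracle; your explicit $F(\vx) = (\vx - f(\vx))m(\vx)^\top$, the separate cut for the norm ball, and the $L^2$ Lipschitz and boundedness constants fed to \Cref{th:monotone} spell out details the paper leaves implicit.

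One small correction concerns your choice of center. Since $\norm{\cdot}_2$ on matrices is the entrywise norm, the matrix realizing $\vx \mapsto c\vx$ has norm $c\sqrt{d}$, so $c \approx L/2$ puts it outside $\cY(m,L)$. Take $c = 1$ as the paper does (ball of radius $1/L$, using $L \gtrsim \sqrt{d}$), or $c \approx L/(2\sqrt{d})$ with a ball of radius about $1/(2\sqrt{d})$.
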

\begin{proof}
  We wish to solve the problem
    \begin{align}
        \qq{find} \mu \in \Delta(\cX) \qq{s.t.} \E_{\vx\sim\mu} \ip{G(\vx), \vx - f(\vx)} \le \eps \qq{for all} G \in \cG(m,L), 
    \end{align}
    or equivalently, 
  \begin{align}
        \qq{find} \mu \in \Delta(\cX) \qq{s.t.} \E_{\vx\sim\mu} \ip{\mA m(\vx), \vx - f(\vx)} \le \eps \qq{for all} \mA \in \cY(m,L), 
    \end{align}
  We will use EAH, so it suffices to check the conditions of \Cref{th:eah}.
   To see that $\cY(m,L)$ is well-bounded, consider the map
  $\vx \mapsto \vx + \mB m(\vx)$. Then we have
  \begin{align}
    \ip{\vx - \vy + \mB (m(\vx) - m(\vy)), \vx - \vy}
    &= \norm{\vx - \vy}_2^2 + \ip{\mB (m(\vx) -m(\vy)), \vx - \vy}
    \\&\ge \norm{\vx - \vy}_2^2 - L \norm{\mB}_2 \norm{\vx - \vy}_2^2,
  \end{align}
so the map is monotone as long as $\norm{\mB}_2 \le 1/L$, that is $\cY(m,L)$ contains a ball of radius $1/L$ around the identity map, which is in $\cY(m,L)$ because it is represented by a matrix $\mA$ with $\norm{\mA}_2 = \sqrt{d} \le L$.

  For a semi-separation oracle,  given $G(\vx) = \mA m(\vx)$ for $\mA \in \R^{d \times k}$ with $\norm{\mA}_2 \le L$, find either an $\vx$ with $\ip{G(\vx) ,\vx - f(\vx)} \le \epsilon$, or $\vx, \vy$ showing that $G$ is not monotone. But \Cref{th:monotone} is exactly such an oracle: indeed, if it returns an $\eps$-VI solution $\vx$, then $\ip{\mA m(\vx), \vx - f(\vx)} \le \eps$ (because in particular $f(\vx) \in \cX$); if it returns a monotonicity violation $(\vx, \vy)$ then $\ip{\mA(m(\vx) - m(\vy)), \vx - \vy} < 0$. 
\end{proof}

We are now ready to state and prove the main result.
\begin{theorem}\label{th:contraction}
    There is a $\poly(d, p, \log(R, 1/\eps))$-time algorithm that, given parameters $R, \epsilon > 0$, positive even integer $p$, well-described $\cX$, and function $f : \cX \to \cX$ that is nonexpansive under the $\ell_p$-norm, computes an $\eps$-approximate fixed point of $f$.
\end{theorem}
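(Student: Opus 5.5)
We reduce to \Cref{th:gefp} via \Cref{fact:nonexp} and \Cref{prop:m is nice}. First, replace $f$ by $f_\gamma(\vx) = (1-\gamma) f(\vx) + \gamma \vx_0$ for an arbitrary $\vx_0 \in \cX$. The $\ell_p$-diameter of $\cX$ is at most $2R\, d^{1/p} \le 2R\sqrt d$, so we take $\gamma = \eps/(4R\sqrt d)$. By \Cref{fact:nonexp}, any $(\eps/2)$-approximate fixed point of $f_\gamma$ is then an $\eps$-approximate fixed point of $f$. Thus it suffices to handle a $(1-\gamma)$-contraction $f_\gamma$ with $\log(1/\gamma) = O(\log(dR/\eps))$. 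Let $\vx^*$ be its unique fixed point.

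Next, we take the feature map $m$ of \Cref{prop:m is nice}, with $k = d(p-1)+1$ and $L = \poly(d,p)(2R)^p$, so that $\log L = \poly(p, \log(dR))$. We run \Cref{th:gefp} with $f_\gamma$ and accuracy $\eps'$ (chosen below) to obtain a finite-support $\mu$. Since $\grad Q^* \in \cG(m,L)$ for $Q^*(\vx) = \norm{\vx - \vx^*}_p^p$, the guarantee gives
\begin{align}
\E_{\vx\sim\mu} \ip{\grad Q^*(\vx), \vx - f_\gamma(\vx)} \le \eps'.
\end{align}
By convexity of $Q^*$,
\begin{align}
\ip{\grad Q^*(\vx), \vx - f_\gamma(\vx)} \ge Q^*(\vx) - Q^*(f_\gamma(\vx)) \ge \bigl(1 - (1-\gamma)^p\bigr) Q^*(\vx) \ge \gamma Q^*(\vx),
\end{align}
where the middle inequality uses $\norm{f_\gamma(\vx) - \vx^*}_p \le (1-\gamma)\norm{\vx - \vx^*}_p$. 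Hence $\E_\mu Q^*(\vx) \le \eps'/\gamma$, and some support point $\vx$ of $\mu$ satisfies $\norm{\vx - \vx^*}_p \le (\eps'/\gamma)^{1/p}$.

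Finally, we convert nearness into approximate fixedness. For any $\vx$,
\begin{align}
\norm{\vx - f_\gamma(\vx)}_p \le \norm{\vx - \vx^*}_p + \norm{f_\gamma(\vx) - \vx^*}_p \le 2\norm{\vx-\vx^*}_p.
\end{align}
So it suffices to have $2(\eps'/\gamma)^{1/p} \le \eps/2$, which holds for $\eps' = \gamma(\eps/4)^p$. Then $\log(1/\eps') = O(p\log(1/\eps) + \log(dR/\eps))$, which is polynomial. We do not know which support point is the good one, so we evaluate $f_\gamma$ on all $kd+1$ support points and output the one minimizing $\norm{\vx - f_\gamma(\vx)}_p$. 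By the above this point is an $(\eps/2)$-approximate fixed point of $f_\gamma$, hence an $\eps$-approximate fixed point of $f$.

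The running time of \Cref{th:gefp} is $\poly(d, k, \log(R, L, 1/\eps'))$, which is $\poly(d, p, \log(R, 1/\eps))$ by the parameter bounds above. The main obstacle is making sure the $p$-th power loss and the choice $L \sim (2R)^p$ cost only polynomially: both enter only logarithmically, contributing factors $p$ and $p\log R$. The remaining technical point is \Cref{prop:m is nice}, namely that $\grad Q^{\hat\vx}$ is a monotone polynomial of degree $p-1$ with coefficient norm at most $L$. This holds since $\grad Q^{\hat\vx}$ is the gradient of a convex function, and its binomial-expansion coefficients are bounded by $p\,2^{p}R^{p}$ per coordinate.
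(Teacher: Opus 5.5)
Your proposal is correct and matches the paper's proof step for step. Both reduce to a $(1-\gamma)$-contraction via \Cref{fact:nonexp} and apply \Cref{th:gefp} with the feature map of \Cref{prop:m is nice}. Both then use convexity of $Q^*(\vx)=\norm{\vx-\vx^*}_p^p$ and contractivity to get $\E_{\vx\sim\mu} Q^*(\vx)\le \eps'/\gamma$, and convert nearness to $\vx^*$ into approximate fixedness with the same choice $\eps'=\gamma(\eps/4)^p$. The differences are cosmetic. Your $\ell_p$-diameter bound $2R\sqrt d$ is looser than the paper's $2R$ (valid since $\norm{\cdot}_p\le\norm{\cdot}_2$ for $p\ge 2$), but this only shrinks $\gamma$ by a harmless $\sqrt d$ factor. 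You also spell out the scan over the support of $\mu$, which the paper leaves implicit.
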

\begin{proof}
  By \Cref{fact:nonexp}, we may WLOG set $f \gets f_\gamma$, which is guaranteed to be $(1-\gamma)$-contractive, for $\gamma$ to be chosen later.  By \Cref{th:gefp} with $m, L$ set as per \Cref{prop:m is nice}, we can compute an $\eps'$-approximate $\cG(m, L)$-expected fixed point of $f$ in time $\poly(d, p, \log(1/\eps'))$. Now taking $Q(\vx) := \norm{\vx - \vx^*}_p^p$ where $\vx^*$ is the unique fixed point of $f$, we have $\grad Q \in \cG(m, L)$, and therefore
  \begin{align}
        \eps' &\ge \E_{\vx\sim\mu} \ip{\grad Q(\vx), \vx - f(\vx)}
        \ge \E_{\vx\sim\mu} [Q(\vx) - Q(f(\vx))]
        \ge \gamma \E_{\vx\sim\mu} Q(\vx)
  \end{align}
  where we use, in turn: the definition of expected fixed point, convexity of $Q$, and finally contractivity of $f$, namely, 
  \begin{align}
    \norm{f(\vx) - \vx^*}_p^p \le (1 - \gamma)^p \norm{\vx - \vx^*}_p^p \le (1 - \gamma) \norm{\vx - \vx^*}_p^p.
  \end{align}
  
  Thus, there must be an $\vx \in \supp\mu$ with 
  \begin{align}
    \norm{\vx - \vx^*}_p \le \delta := \ab(\frac{\eps'}{\gamma})^{1/p}
  \end{align}
  and therefore
  \begin{align}\label{eq:eps bound}
    \norm{\vx - f_\gamma(\vx)}_p \le \underbrace{\norm{\vx - \vx^*}_p}_{\le \delta} + \underbrace{\norm{\vx^* - f_\gamma(\vx^*)}_p}_{=0} + \underbrace{\norm{f_\gamma(\vx^*) - f_\gamma(\vx)}_p}_{\le \norm{\vx^* - \vx} \le \delta} \le 2 \cdot \ab(\frac{\eps'}{\gamma})^{1/p},
  \end{align}
  which implies, via \Cref{fact:nonexp}, that $\norm{\vx - f(\vx)}_p \le 2 ( ({\eps'}/{\gamma})^{1/p} + \gamma R) $. Taking $\gamma = \eps/(4R)$ and $\eps' = \gamma\cdot (\eps/4)^p$ completes the proof.
\end{proof}

\section{Generalizations}\label{sec:generalizations}

In this section, we discuss generalizations of $\ell_p$-contraction problems whose efficient solutions are enabled by our techniques.

\subsection{Unknown norms, rotated norms, and generalized contractions}

In this section, we discuss a generalization of \Cref{th:contraction} to arbitrary norms, and indeed to contractions under functions that may not be norms.
\begin{definition}
  A function $f : \cX \to \cX$ is a $(1-\gamma)$-contraction under a convex function $Q : \cX \to \R_{\ge 0}$ if $$Q(f(\vx)) \le (1 - \gamma) Q(\vx) \qq{for all} \vx \in \cX.$$ 
\end{definition}

 The notion of a contraction under $Q$ generalizes the norm contraction, in the following sense:
\begin{fact}\label{fact:norm is poly}
  Suppose that $f : \cX \to \cX$ is a $(1-\gamma)$-contraction under the $\ell_p$-norm, where $p$ is a positive even integer. Then $f$ is a $(1-\gamma)^p \le (1 - \gamma)$-contraction under the degree-$p$ polynomial $Q(\vx) = \norm{\vx - \vx^*}_p^p$, where $\vx^*$ is the unique fixed point of $f$.
\end{fact}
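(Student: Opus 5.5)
The plan is to verify two things directly from the definitions. First, that $f$ has a unique fixed point $\vx^* \in \cX$ so that $Q$ is well defined. Second, that the inequality $Q(f(\vx)) \le (1-\gamma)^p Q(\vx)$ holds pointwise. Existence and uniqueness of $\vx^*$ come from Banach's fixed point theorem as recalled in the preliminaries. It applies because $\cX$ is convex and compact, hence complete under $\norm{\cdot}_p$, and $f$ is a $(1-\gamma)$-contraction with $\gamma > 0$. (If $\gamma = 0$ were allowed, one could instead take $\vx^*$ to be any fixed point, which exists by Brouwer; the argument below is unchanged.)

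For the main inequality, I will fix $\vx \in \cX$ and apply the contraction property to the pair $(\vx, \vx^*)$, using $f(\vx^*) = \vx^*$:
\begin{align}
  \norm{f(\vx) - \vx^*}_p = \norm{f(\vx) - f(\vx^*)}_p \le (1-\gamma)\norm{\vx - \vx^*}_p.
\end{align}
Both sides are nonnegative, so raising to the $p$-th power preserves the inequality. This gives $Q(f(\vx)) \le (1-\gamma)^p Q(\vx)$, which is exactly the definition of a $(1-\gamma)^p$-contraction under $Q$.

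It remains to check the side conditions in the definition of contraction under $Q$, and the comparison $(1-\gamma)^p \le 1-\gamma$.
\begin{itemize}
\item $Q$ is nonnegative.
\item $Q$ is convex, since it is the composition of the convex, nonnegative map $\vx \mapsto \norm{\vx - \vx^*}_p$ with the convex, nondecreasing map $t \mapsto t^p$ on $\R_{\ge 0}$. Alternatively, it is a sum of the convex univariate terms $(x_i - x^*_i)^p$ with $p$ even.
\item $Q$ is a polynomial of degree $p$: because $p$ is even, $\abs{x_i - x_i^*}^p = (x_i - x_i^*)^p$, so $Q(\vx) = \sum_i (x_i - x_i^*)^p$.
\item Since $0 \le 1-\gamma \le 1$ and $p \ge 1$, we have $(1-\gamma)^p \le 1-\gamma$.
\end{itemize}

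There is no real obstacle here. The only point requiring care is the evenness of $p$, which is what makes $Q$ a genuine polynomial rather than merely a function of absolute values.
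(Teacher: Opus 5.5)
Your proof is correct and matches the paper's argument, which appears inline in the proof of \Cref{th:contraction}: apply the contraction inequality to the pair $(\vx, \vx^*)$ using $f(\vx^*) = \vx^*$, raise both sides to the $p$-th power, and use $(1-\gamma)^p \le 1-\gamma$. Your additional checks (nonnegativity and convexity of $Q$, and that $Q$ is a polynomial because $p$ is even) are ones the paper leaves implicit. Your assumption $0 \le 1-\gamma \le 1$ is also automatic, since $\gamma \ge 0$ by definition and $\gamma > 1$ is impossible once $\cX$ contains two distinct points.
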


Our main technical result, \Cref{th:gefp}, can then be used to prove a more general version of \Cref{th:contraction}. Indeed, let $f$ be a $(1-\gamma)$-contraction under convex function $Q$. Suppose $\grad Q \in \cG$, and let $\mu$ be an $\eps$-approximate $\cG$-EFP of $f$. Then \begin{align}
        \eps &\ge \E_{\vx\sim\mu} \ip{\grad Q(\vx), \vx - f(\vx)}
        \ge \E_{\vx\sim\mu} [Q(\vx) - Q(f(\vx))]
        \ge \gamma \E_{\vx\sim\mu} Q(\vx)
    \end{align}
and therefore we have
\begin{align}
  Q(\vx^*) \le \E_{\vx\sim\mu} Q(\vx) \le \frac{\eps}{\gamma}.
\end{align}
with $\vx^* := \E_{\vx\sim\mu} \vx$.
Hence, \Cref{th:gefp} implies the following result:
\begin{theorem} \label{th:Q contraction}
  There is a $\poly(d, k, \log(R, L, 1/\eps, 1/\gamma))$-time algorithm that, given parameters $R, L, \eps$, and $\gamma > 0$, well-described $\cX$, function $f : \cX \to \cX$, and $L$-Lipschitz feature map $m : \cX \to B_k(\vec 0, L)$, and promised that $f$ is $(1-\gamma)$-contractive under an (unknown) convex $Q : \cX \to \R_{\ge 0}$ such that $\grad Q \in \cG(m, L)$, outputs a point $\vx^*$ with $Q(\vx^*) \le \eps/\gamma$.
\end{theorem}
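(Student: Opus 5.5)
The plan is to run the algorithm of \Cref{th:gefp} with the given parameters $R, L$ and accuracy $\eps$. This produces a distribution $\mu$ of support at most $kd+1$ that is an $\eps$-approximate $\cG(m,L)$-expected fixed point of $f$. From $\mu$ we output its mean, $\vx^* := \E_{\vx\sim\mu}\vx$. Because $\mu$ is given explicitly as a finite list of points with probabilities, the mean costs only $O(d(kd+1))$ arithmetic operations. The total running time is therefore that of \Cref{th:gefp}, which is $\poly(d,k,\log(R,L,1/\eps))$ and so certainly within the claimed bound. The point $\vx^*$ lies in $\cX$ because $\cX$ is convex and $\supp\mu\subseteq\cX$.

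For correctness, the promise says $\grad Q\in\cG(m,L)$, so $G=\grad Q$ is one of the operators against which $\mu$ is certified. The definition of an $\eps$-approximate EFP then gives
\begin{align}
\eps \ge \E_{\vx\sim\mu}\ip{\grad Q(\vx), \vx - f(\vx)}.
\end{align}
Next, convexity of $Q$ gives the first-order inequality $Q(f(\vx))\ge Q(\vx)+\ip{\grad Q(\vx), f(\vx)-\vx}$, which we apply pointwise. Since $f(\vx)\in\cX$ this is legitimate on $\cX$, and it lower bounds the right-hand side by $\E_\mu[Q(\vx)-Q(f(\vx))]$. The contraction hypothesis $Q(f(\vx))\le(1-\gamma)Q(\vx)$ then lower bounds this by $\gamma\,\E_\mu Q(\vx)$. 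Together these give $\E_\mu Q(\vx)\le \eps/\gamma$. Finally, Jensen's inequality applied to the convex function $Q$ on the convex set $\cX$ gives $Q(\vx^*)=Q(\E_\mu\vx)\le\E_\mu Q(\vx)\le\eps/\gamma$, which is the claim.

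The only step needing care is the use of $\grad Q$ in the first-order convexity inequality. Since $\grad Q$ is assumed to be of the form $\vx\mapsto\mA m(\vx)$ with $m$ Lipschitz, $Q$ is differentiable on $\cX$. At boundary points the gradient should be read as the gradient of a differentiable convex extension, or equivalently the inequality can be obtained from the inner points by continuity. I expect this to be routine rather than a real obstacle, since all the algorithmic work is already done in \Cref{th:gefp}.

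The dependence on $\gamma$ in the runtime is then vacuous. If one instead wants $Q(\vx^*)\le\eps$, we run \Cref{th:gefp} with accuracy $\eps\gamma$, and this is where the $\log(1/\gamma)$ enters the stated bound.
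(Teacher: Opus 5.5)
Your proposal is correct and follows the paper's argument essentially verbatim. You run \Cref{th:gefp} at accuracy $\eps$, apply the EFP guarantee to $G = \grad Q$, chain first-order convexity with the contraction inequality to get $\E_{\vx\sim\mu} Q(\vx) \le \eps/\gamma$, and finish with Jensen at the mean $\vx^* = \E_{\vx\sim\mu}\vx$. Two remarks you add are accurate points the paper leaves implicit: the $\log(1/\gamma)$ term is not needed for this guarantee, and the gradient at boundary points needs a short justification.
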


It also generalizes prior notions of expected fixed point that have been used for computing equilibria in games: indeed, the {\em expected fixed point}~\cite{Zhang24:Efficient}, {\em quadratic expected fixed point}, and {\em convex expected fixed point} \cite{Anagnostides26:ComplexityCorrelated} arise respectively by setting $\cG$ to the set of all constant operators, affine operators $\vx \mapsto \mA \vx + \vb$ with $\mA \succeq \vec 0$, and gradients of convex functions.

\paragraph{Unknown norms} The proof of \Cref{th:contraction} can also be applied, essentially verbatim, to the case where the value $p$ is not known a priori. Indeed, we can simply run the algorithm for increasing values of $p$, and target accuracy $\eps/\sqrt{d}$ until we find a point $\vx$ with $\norm{\vx - f(\vx)}_2 \le \eps$, which guarantees $\norm{\vx - f(\vx)}_p \le \eps$ for every $p \ge 2$.

\paragraph{Other norms, such as rotated norms} \Cref{th:gefp} can also be used to find fixed points of contractions under other norms. As an example, consider the seminorm $\vx \mapsto \norm{\mP \vx}_p$ for some matrix $\mP$ that is promised to be bounded, say, $\norm{\mP}_2 \le L$. Then, as before, $\vx \mapsto \norm{\mP \vx}_p^p$ is a polynomial of degree $p$ with bounded coefficients, but since it now contains cross-terms, our function $m$ must now include such cross-terms as well. Combining with the above observation of unknown norms, we have the following generalization.
\begin{theorem} \label{th:rotated norm}
  Let $f : \cX \to \cX$ be a function, promised to be nonexpansive under some norm $\norm{\vx} := \norm{\mP \vx}_q$ with $\norm{\mP}_2 \le 1$ and $q \le p$ a positive even integer. Then there is a $\poly(d^p, \log(R, 1/\eps))$-time algorithm that, given $\cX, f$, and $\eps > 0$ (but not $q$ or $\mP$), computes a distribution $\mu$ with $\E_{\vx\sim\mu} \norm{\vx^* - f(\vx^*)}^q \le \eps.$ If $\mP$ is promised to be diagonal then the runtime reduces to $\poly(d, p, \log(R, 1/\eps))$.
\end{theorem}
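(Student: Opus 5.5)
The plan is to instantiate \Cref{th:gefp} with a feature map $m$ consisting of all monomials of degree at most $p-1$ in $\vx$ (including cross-terms). This gives $k = \binom{d+p-1}{p-1} \le \poly(d^p)$ features. In the diagonal case we instead use the separable map of \Cref{prop:m is nice}, so that $k = d(p-1)+1$.

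First I would check that for every admissible pair $(\mP, q)$ and every $\hat\vx\in\cX$, the operator $\grad Q^{\hat\vx}$ with $Q^{\hat\vx}(\vx) := \norm{\mP(\vx - \hat\vx)}_q^q$ lies in $\cG(m,L)$ for some $L \le \poly(d^p)\cdot(2R)^p$. This is the cross-term version of \Cref{prop:m is nice} promised in the appendix. The ingredients are:
\begin{itemize}
\item $Q^{\hat\vx}$ is a convex polynomial of degree $q\le p$, so its gradient is monotone and has degree at most $q-1\le p-1$, and is therefore a linear combination $\mA m(\vx)$.
\item The coefficients of $\mA$ are bounded by expanding $(\ip{\vec P_i,\vx-\hat\vx})^{q}$ with $\norm{\vec P_i}_2\le 1$ and $\norm{\hat\vx}_2\le R$. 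Multinomial coefficients contribute at most $d^{q}$, so $\norm{\mA}_2\le \poly(d^p)(2R)^p$.
\item $m$ is $L$-bounded and $L$-Lipschitz on $B_d(\vec 0,R)$ for a similar $L$.
\end{itemize}
Since $m$ does not depend on $q$ or $\mP$, a single run covers the unknown norm simultaneously.

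Next I would reduce to the contractive case as in the proof of \Cref{th:contraction}. Replace $f$ by $f_\gamma$ from \Cref{fact:nonexp}; it is a $(1-\gamma)$-contraction in the seminorm $\norm{\mP\cdot}_q$. Here there is a subtlety: if $\mP$ is singular, fixed points need not be unique. In that case I would take $\vx^*$ to be any fixed point of $f_\gamma$. One exists by Brouwer, since $f_\gamma$ is continuous: $f$ is nonexpansive and we may assume the norm is genuine, or otherwise simply use the hypothesis that a fixed point exists.

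With $Q := Q^{\vx^*}$ we get $Q(f_\gamma(\vx)) \le (1-\gamma)^q Q(\vx)\le(1-\gamma)Q(\vx)$. Applying \Cref{th:gefp} with accuracy $\eps'$ gives $\mu$ satisfying
\[
\eps' \ge \E_{\vx\sim\mu}\ip{\grad Q(\vx),\vx-f_\gamma(\vx)} \ge \gamma\E_{\vx\sim\mu} Q(\vx).
\]
This is exactly the chain from \Cref{th:Q contraction}.

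The remaining step converts this into the stated guarantee $\E_{\vx\sim\mu}\norm{\vx-f(\vx)}^q\le\eps$; I read the "$\vx^*$" inside the expectation of the statement as the sampled $\vx$. Pointwise, the triangle inequality for the seminorm and nonexpansiveness give
\[
\norm{\vx-f_\gamma(\vx)}\le 2\norm{\vx-\vx^*},
\]
and $\norm{f_\gamma(\vx)-f(\vx)}\le\gamma\,\op{diam}\le 2\gamma R\sqrt d$, using $\norm{\mP}_2\le1$ and $\norm{\cdot}_q\le\norm{\cdot}_2\cdot d^{1/q}$ loosely. Then $(a+b)^q\le 2^{q-1}(a^q+b^q)$ yields
\[
\E_\mu\norm{\vx-f(\vx)}^q\le 2^{q-1}\bigl(2^q\eps'/\gamma+(2\gamma R\sqrt d)^q\bigr).
\]
Choosing $\gamma = \eps^{1/q}/(8R\sqrt d)$ and $\eps'=\gamma\eps/4^{p}$ makes this at most $\eps$. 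Since $q$ is unknown, I would choose $\gamma$ and $\eps'$ using $p$ as a worst case, e.g.\ $\gamma=\eps/(8R\sqrt d)^{p}$, which is valid for every $q\le p$ because $\eps\le1$. The running time is $\poly(k,\log(R,L,1/\eps'))=\poly(d^p,\log(R,1/\eps))$, and in the diagonal case $k$ is linear in $dp$.

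I expect the main obstacle to be the first step: bounding the coefficients of $\grad\norm{\mP(\vx-\hat\vx)}_q^q$ in the monomial basis by $\poly(d^p)$ uniformly over $\mP$, together with the uniform-in-$q$ choice of parameters. The rest follows the template already established.
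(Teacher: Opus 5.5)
Your proposal is correct and takes the same approach as the paper. The paper's proof just reruns the argument of \Cref{th:contraction} with the cross-term feature map of \Cref{prop:m is nice with cross}, or the separable one of \Cref{prop:m is nice} in the diagonal case. You additionally spell out the steps the paper leaves implicit: the in-expectation conversion via $(a+b)^q \le 2^{q-1}(a^q+b^q)$, and a choice of $\gamma, \eps'$ that works uniformly for every even $q \le p$. The detour about singular $\mP$ is unnecessary, because the statement stipulates a genuine norm, so Banach's theorem directly gives the fixed point $\vx^*$ of $f_\gamma$.
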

\begin{proof}
  Follow the proof of \Cref{th:contraction}. If $\mP$ is not promised to be diagonal, we can take $m(\vx)$ as in \Cref{prop:m is nice with cross}. Otherwise, we take $m(\vx)$ as in \Cref{prop:m is nice}. %
\end{proof}
If $q$ and $\mP$ are given, an approximate fixed point $\vx$ can, as usual, be extracted by iterating over $\mu$ and checking every $\vx$ in its support. Otherwise, a randomized algorithm with arbitrarily high success probability $1-\tau$ can be created in the standard manner: multiply $\eps$ by a factor of $\tau$, run the algorithm, and then sample from the resulting $\mu$.

\subsection{The special case of $\ell_p$-norms on the hypercube}\label{sec:lp-simplified}

In the special case that the desired norm is the $\ell_p$-norm and the domain is the hypercube $\cX = [0, 1]^d$, our algorithm can be simplified and in fact we can show a slightly more general result. This is because, for the special case of separable VI over the hypercube, VI solutions can be found efficiently even when the operator is not monotone. We say a VI operator $G : [0, 1]^d \to \R^d$ is {\em separable} if $G(\vx)_i$ depends only on $x_i$.
\begin{proposition}[Complexity of separable VIs]\label{th:separable VI}
  There is a $\poly(d, \log(L, 1/\eps))$-time algorithm that, given parameters $L,\eps > 0$, and $L$-Lipschitz separable operator $G : [0, 1]^d \to B_d(\vec 0, L)$, outputs an $\eps$-approximate VI solution, that is, a point $\vx \in \cX$ with  $\ip{G(\vx), \vx - \vy} \le \eps$ for all $\vy \in [0, 1]^d$.
\end{proposition}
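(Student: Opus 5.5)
The plan is to reduce the problem to $d$ independent one-dimensional problems and solve each by bisection. Write $G(\vx)_i = g_i(x_i)$ with $g_i : [0,1] \to [-L, L]$ being $L$-Lipschitz. Separability of both the operator and the domain $[0,1]^d$ gives
\begin{align}
  \max_{\vy \in [0,1]^d} \ip{G(\vx), \vx - \vy} = \sum_{i=1}^d \max_{y_i \in [0,1]} g_i(x_i)(x_i - y_i).
\end{align}
It therefore suffices to find, for each coordinate $i$ separately, a point $x_i \in [0,1]$ with $\max_{y \in [0,1]} g_i(x_i)(x_i - y) \le \eps/d$. The $i$-th oracle $g_i$ is evaluated by querying $G$ at any point whose $i$-th coordinate is $x_i$, for example with the other coordinates set to $0$.

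For a single coordinate with $g = g_i$ and $\eps' = \eps/d$, the quantity $\max_{y} g(x)(x-y)$ equals $g(x)\,x$ when $g(x) \ge 0$ and $-g(x)(1-x)$ when $g(x) < 0$. A point $x$ is therefore an $\eps'$-VI solution if $x = 0$ and $g(0) \ge 0$, or $x = 1$ and $g(1) \le 0$, or $\abs{g(x)} \le \eps'$. The case analysis goes as follows:
\begin{enumerate}
  \item If $g(0) \ge 0$, output $x = 0$.
  \item Else if $g(1) \le 0$, output $x = 1$.
  \item Otherwise $g(0) < 0 < g(1)$. Bisect on the sign of $g$, keeping an interval $[a,b]$ with $g(a) < 0 < g(b)$.
\end{enumerate}
By continuity there is a root of $g$ in $[a,b]$. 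After $T = \lceil \log_2(L/\eps') \rceil$ halvings we have $b - a \le \eps'/L$, and the $L$-Lipschitz property gives $\abs{g(a)} \le L(b-a) \le \eps'$ because a root lies within $b-a$ of $a$. We output $x = a$. If some midpoint happens to satisfy $\abs{g(\text{mid})} \le \eps'$, we may stop early.

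Summing over coordinates gives $\max_{\vy} \ip{G(\vx), \vx - \vy} \le d \cdot \eps/d = \eps$. The total cost is $O(d \log(dL/\eps))$ oracle calls plus arithmetic, which is within $\poly(d, \log(L, 1/\eps))$. No step is a real obstacle. The only points that need care are the boundary sign cases and the fact that $\abs{g(x)} \le \eps'$ bounds the per-coordinate gap at any interior $x$, since $\abs{x - y} \le 1$. Monotonicity is never used, which is exactly why this proposition can stand in for \Cref{th:monotone} as a semi-separation oracle over non-monotone separable operator classes.
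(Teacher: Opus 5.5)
Your proposal is correct and follows essentially the same route as the paper: the same coordinate-wise decomposition, the same three-case analysis ($g_i(0)\ge 0$, $g_i(1)\le 0$, or a sign change located by bisection to accuracy $\eps/d$), and the same summation over the $d$ coordinates. You also spell out details the paper leaves implicit, namely that each $g_i$ inherits $L$-Lipschitzness from $G$ and the explicit bisection length $\lceil \log_2(dL/\eps)\rceil$.
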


\begin{proof}
  Let $G(\vx)_i = G_i(x_i)$. For each $i \in [d]$, there are three cases:
  \begin{enumerate}
    \item $G_i(0) \ge 0$. Then setting $x_i = 0$ gives $G_i(x_i)(x_i-y_i) \le 0$ for all $y_i \in [0, 1]$.
    \item $G_i(1) \le 0$. Then setting $x_i = 1$ gives $G_i(x_i)(x_i-y_i) \le 0$ for all $y_i \in [0, 1]$.
    \item $G_i(0) < 0$ and $G_i(1) > 0$. Then a binary search, running in $\polylog(L, 1/\eps)$ time, can be used to identify an $x_i$ such that $|G_i(x_i)| \le \eps/d$, so in particular $G_i(x_i) (x_i - y_i) \le \eps/d$ for all $y_i \in [0, 1]$ as well.
  \end{enumerate}
  Construct $\vx \in [0, 1]^d$ by selecting $x_i$ as above for every coordinate $i$; this takes time $\poly(d, \log(L, 1/\eps))$. Then for any coordinate $i$ we have $G_i(x_i) (x_i - y_i) \le \eps/d$ for all $y_i \in [0, 1]$; summing over $i \in [d]$ yields the theorem.
\end{proof}

This generalization allows us to drop the condition of monotonicity from \Cref{th:gefp}, for the special case that $\cX$ is the hypercube and $G$ is separable. In particular, for any $m : [0, 1] \to B_d(\vec 0, L)$ and $L > 0$, define $\tilde\cY(m, L)$ to be the set of matrices $\mA \in \R^{d \times k}$ with $\norm{\mA}_2 \le L$, and $\tilde\cG(m, L)$ to be the set of all (not necessarily monotone) operators of the form $G(\vx)_i = \ip{\va_i, m(x_i)}$, for $\mA \in \tilde\cY(m, L)$, where $\va_i$ is the $i$th row of $\mA$.

\begin{theorem}\label{th:gefp2}
  There is a $\poly(d, k, \log(L, 1/\eps))$-time algorithm that, given parameters $L, \eps > 0$, function $f : [0, 1]^d \to [0, 1]^d$, and $L$-Lipschitz feature map $m : [0, 1] \to B_k(\vec 0, L)$, outputs an $\eps$-approximate $\tilde\cG(m,L)$-expected fixed point of $f$, of support at most $kd + 1$.
\end{theorem}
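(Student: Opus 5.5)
The plan is to mirror the proof of \Cref{th:gefp}: apply ellipsoid against hope (\Cref{th:eah}), with \Cref{th:separable VI} replacing \Cref{th:monotone} as the semi-separation oracle. The problem is cast in the form \eqref{eq:eah primal} with $\cY = \tilde\cY(m,L) = \{\mA \in \R^{d\times k} : \norm{\mA}_2 \le L\}$, $\ell = dk$, and
\begin{align}
  \ip{F(\vx), \mA} := \sum_{i=1}^d \ip{\va_i, m(x_i)}\,(x_i - f(\vx)_i),
\end{align}
which is linear in $\mA$. Thus $F(\vx)$ is the $d\times k$ matrix whose $i$th row is $(x_i - f(\vx)_i)\, m(x_i)$, and a $\mu$ solving \eqref{eq:eah primal} is exactly an $\eps$-approximate $\tilde\cG(m,L)$-EFP. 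The support bound $\ell + 1 = kd+1$ then comes directly from \Cref{th:eah}.

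First, check the hypotheses of \Cref{th:eah}. The set $\tilde\cY(m,L)$ is a Frobenius ball of radius $L$ centred at $\vec 0$, so it is trivially well-bounded, which is simpler than in \Cref{th:gefp}, where monotonicity had to be preserved. Next, bound $\norm{F(\vx)}_2$: since $\abs{x_i - f(\vx)_i} \le 1$ and $\norm{m(x_i)}_2 \le L$, we get $\norm{F(\vx)}_2 \le \sqrt d\, L$, which is polynomially bounded. Feasibility of \eqref{eq:eah primal} with $\eps = 0$ is not actually needed separately. The semi-separation oracle always returns a good-enough response, so by weak duality the argument of \Cref{th:eah} goes through; the same point could also be made by a Sion-type minimax argument.

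The semi-separation oracle works as follows. Given any $\mA \in \R^{d\times k}$, possibly outside $\tilde\cY$, first check whether $\norm{\mA}_2 > L$; if so, return the separating direction $-\mA$. Otherwise the operator $G(\vx)_i = \ip{\va_i, m(x_i)}$ is separable and bounded by $\norm{\va_i}_2 L$, so $\norm{G(\vx)}_2 \le L^2$. It is also Lipschitz with constant at most $L^2$. Run \Cref{th:separable VI} with parameter $L^2$ to obtain $\vx \in [0,1]^d$ with $\ip{G(\vx), \vx - \vy} \le \eps$ for all $\vy$. Since $f(\vx) \in [0,1]^d$, substituting $\vy = f(\vx)$ gives $\ip{F(\vx), \mA} \le \eps$, so the point mass $\delta(\vx)$ is a good-enough response. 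Because this step never needs monotonicity, the oracle always succeeds.

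The main obstacle is minor bookkeeping. We must confirm that the generalized \Cref{th:eah} tolerates an oracle that also answers for $\mA$ outside $\cY$; it does, per the remark after the oracle definition. We must also confirm that the Lipschitz and boundedness parameters stay polynomial in $L$, so that the running time is $\poly(d,k,\log(L,1/\eps))$.
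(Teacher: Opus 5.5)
Your proposal is correct and matches the paper's proof. Both run ellipsoid against hope over the Frobenius ball $\tilde\cY(m,L)$, using \Cref{th:separable VI} (evaluated at $\vy = f(\vx)$) to supply good-enough responses and the trivial ball separation $-\mA$ for the separation case. Your extra bookkeeping is accurate and fills in details the paper leaves implicit: the bound $\norm{F(\vx)}_2 \le \sqrt{d}\,L$, the bound $L^2$ on both the size and the Lipschitz constant of $G$, and the remark that the EAH argument never uses exact feasibility at $\eps = 0$.
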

\begin{proof}
  Follow the proof of \Cref{th:gefp}, with $G(\vx)$ set, as above, by $G(\vx)_i = \ip{\va_i, m(x_i)}$.
  The well-boundedness of $\tilde\cY(m, L)$ now comes for free, because the set of valid matrices $\mA$ is by definition a ball. For a semi-separation oracle, use \Cref{th:separable VI}, noting that nonmonotone operators are valid and thus the ``separation'' part is simply a separation oracle for the ball $\norm{\mA}_2 \le L$.
\end{proof}
\Cref{th:gefp2} is in general incomparable to \Cref{th:gefp}: \Cref{th:gefp} requires monotonicity whereas \Cref{th:gefp2} does not, but on the other hand \Cref{th:gefp2} requires that $\cX = [0, 1]^d$ and $G$ is separable, whereas \Cref{th:gefp} does not.

The use of \Cref{th:gefp} in \Cref{th:contraction} can be replaced with \Cref{th:gefp2}, because for $Q(\vx) := \norm{\vx - \vx^*}_p^p$, the gradient $\grad Q$ is indeed separable in the required sense. Hence, this gives an alternative---and, in fact, slightly simpler---proof of our main result, but only in the special case $\cX = [0, 1]^d$.

\begin{remark}
  One may ask whether the above argument can be used to prove our main result in full generality, by extending a given $\ell_p$-nonexpansive $f : \cX \to \cX$ for $\cX \subseteq [0, 1]^d$ to an $\ell_p$-nonexpansive $\tilde f : [0, 1]^d \to [0, 1]^d$. For $p=2$, this can be done by taking $\tilde f(\vx) := f(\Pi_\cX(\vx))$ where $\Pi_\cX$ is the (Euclidean) projection onto $\cX$. However, for other values of $p$, we are not aware of any means by which to efficiently construct such an extension $\tilde f$. For $p=\infty$, $\tilde f$ can be constructed by McShane's theorem~\cite{McShane34:Extension} applied coordinate-wise, namely
  \begin{align}
      \tilde f(\vx)_i = \Pi_{[0, 1]} \ab\{ \min_{\vy \in \cX} \ab\{ f_i(\vy) +  \norm{\vx - \vy}_\infty \} \}
  \end{align}
  However, this $\tilde f$ cannot be efficiently computed, as it requires optimizing possibly nonconvex functions over $\cX$.
\end{remark}

\subsection{Total function problem version}
We now extend \Cref{th:contraction} so that it explicitly outputs certificates of promise violations in the case that the contraction or feasibility ($f(\cX) \subseteq \cX$) promises are broken. For simplicity, we still keep the well-boundedness of $\cX$ as a true promise. Otherwise, the below theorem makes the problem total.
\begin{theorem}\label{th:contr total}
  There is a $\poly(d, p, \log(R, 1/\eps))$-time algorithm that, given parameter $\epsilon > 0$, positive even integer $p$, well-described $\cX$, and function $f : \cX \to \R^d$, outputs one of the following:
  \begin{itemize}
    \item (Solution) an $\eps$-approximate fixed point of $f$,
    \item (Contraction violation) two points $\vx, \vy$ with $\norm{f(\vx) - f(\vy)}_p > \norm{\vx - \vy}_p$, or
    \item (Feasibility violation) a point $\vx \in \cX$ with $f(\vx) \notin \cX$.
  \end{itemize}
\end{theorem}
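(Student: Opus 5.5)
The plan is to rerun the proof of \Cref{th:contraction} through \Cref{th:gefp} and \Cref{th:eah}, and to instrument every place where a broken promise could make the argument fail. Each such failure should either be turned into one of the two violation certificates or trigger an early exit.

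First I replace $f$ by $f_\gamma(\vx) = (1-\gamma) f(\vx) + \gamma \vx_0$, choosing $\gamma = \eps/(4R)$ and $\vx_0 \in \cX$ as in the proof of \Cref{th:contraction}. For this I need an interior point. I would get one by running ellipsoid with the separation oracle of $\cX$; alternatively, any point produced later will do. Whenever the algorithm evaluates $f$ at a point $\vx \in \cX$, it first checks $f(\vx) \in \cX$ using the separation oracle. If the check fails, it outputs the feasibility violation $\vx$. So whenever I proceed, $f_\gamma(\vx) \in \cX$ for every queried $\vx$. This is the only property of self-mapping that the argument uses: in \Cref{th:gefp} the good-enough response is certified by plugging $\vy = f_\gamma(\vx) \in \cX$ into the VI guarantee of \Cref{th:monotone}. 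The EAH run of \Cref{th:gefp} itself never relies on $f$ being a contraction. Its well-boundedness comes from $\cY(m,L)$, and its semi-separation oracle comes from \Cref{th:monotone}. It therefore still outputs a finite-support $\mu$ that is an $\eps'$-approximate $\cG(m,L)$-EFP of $f_\gamma$, with $\eps' = \gamma(\eps/4)^p$, and it does so regardless of the promise.

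Next, I check every $\vx \in \supp\mu$, of which there are at most $kd+1$. If some $\vx$ satisfies $\norm{\vx - f(\vx)}_p \le \eps$, I output it as a solution. Otherwise I need a contraction violation. If $f$ were $\ell_p$-nonexpansive, then $f_\gamma$ would be a self-map contraction with fixed point $\vx^*$, and the chain in the proof of \Cref{th:contraction} would produce a good $\vx$. I cannot compute $\vx^*$, though, so I need a certificate that avoids it.

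This is the main obstacle. My plan is to run the contrapositive of the chain using points the algorithm can actually compute, namely the support points $\vx_j$ and their images $f(\vx_j)$. The EFP inequality, tested against $G = \grad Q^{\hat\vx}$ (which lies in $\cG(m,L)$ by \Cref{prop:m is nice}), holds for every $\hat\vx \in \cX$. Combining this with convexity gives
\[
\E_\mu\bigl[\norm{\vx - \hat\vx}_p^p - \norm{f_\gamma(\vx) - \hat\vx}_p^p\bigr] \le \eps' \quad \text{for all } \hat\vx \in \cX.
\]
I would take $\hat\vx = f_\gamma(\vx_j)$ for a support point $\vx_j$ chosen to be (approximately) closest to the fixed-point region, for instance the $\vx_j$ minimizing $\norm{\vx_j - f_\gamma(\vx_j)}_p$. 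If $f_\gamma$ were $(1-\gamma)$-contractive on the finite set $\supp\mu$, then for each $\vx$ we would have
\[
\norm{f_\gamma(\vx) - f_\gamma(\vx_j)}_p \le (1-\gamma)\norm{\vx - \vx_j}_p.
\]
Together with the triangle inequality, this should yield the same contradiction as before: the displayed inequality would force some support point to be an approximate fixed point. Hence, in the failure case, some pair $(\vx, \vx_j)$ of support points violates $(1-\gamma)$-contraction of $f_\gamma$. Such a pair is exactly a pair violating $\ell_p$-nonexpansiveness of $f$, and it can be found by checking all $O((kd)^2)$ pairs.

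If this finite-support argument turns out to be too weak, my fallback is to enlarge the candidate set. I would include the points $\vx_0$, $f_\gamma(\vx_j)$, and iterates of $f_\gamma$ from the support, and I would tune $\eps'$ polynomially smaller so that the contradiction still goes through. Once that is done, the running time stays $\poly(d, p, \log(R, 1/\eps))$.
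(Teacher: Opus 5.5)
Your setup matches the paper: replacing $f$ by $f_\gamma$, checking $f(\vx)\in\cX$ at every query, observing that the EAH run of \Cref{th:gefp} never needs the contraction promise, and translating a $(1-\gamma)$-violation of $f_\gamma$ into a nonexpansiveness violation of $f$. The feasibility checks also make every $f_\gamma(\vx)$ with $\vx\in\supp\mu$ a legal center.

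\textbf{The gap is the key step.} Testing the EFP inequality at the single center $\hat\vx = f_\gamma(\vx_j)$ and using contraction on support pairs only gives
\[
\E_{\vx\sim\mu}\norm{\vx - f_\gamma(\vx_j)}_p^p \le (1-\gamma)\,\E_{\vx\sim\mu}\norm{\vx - \vx_j}_p^p + \eps'.
\]
Passing from $\norm{\vx - f_\gamma(\vx_j)}_p$ back to $\norm{\vx-\vx_j}_p$ by the triangle inequality costs exactly the residual $\norm{\vx_j - f_\gamma(\vx_j)}_p$ you are trying to bound. So the argument is circular, and it genuinely fails.

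Take $d=1$, let the $\tfrac{2}{3}$-contraction $g(x) = r - \tfrac{2}{3}x$ play the role of $f_\gamma$, and let $\mu = 0.4\,\delta(0) + 0.6\,\delta(3r/2)$.
\begin{itemize}
  \item The residuals are $r$ and $3r/2$, so your rule picks $x_j=0$ and the center $g(0)=r$.
  \item The pair is contractive.
  \item The left side of your EFP inequality at this center is $0.4\,r^p + 0.6\bigl((r/2)^p - r^p\bigr) < 0$ for every even $p$.
\end{itemize}
Yet no support point has residual below $r$, and $r$ is independent of $\eps'$. This $\mu$ is not an actual EFP (it fails, e.g., the center $\hat x = 0$), but that is precisely the information your argument discards. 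So your fallback of more candidate pairs or a smaller $\eps'$ does not help. What must change is which test functions you feed into the EFP inequality.

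\textbf{The missing idea} is to test against a mixture of centers drawn from $\mu$ itself, in two layers (write $f$ for $f_\gamma$).
\begin{itemize}
  \item Averaging over centers $\hat\vx=\vy\sim\mu$, independent of $\vx$, gives $\E\norm{\vx-\vy}_p^p \le \E\norm{f(\vx)-\vy}_p^p + \eps'$.
  \item Applying the EFP inequality in the $\vy$ variable with center $f(\vx)$, for each fixed $\vx$, gives $\E\norm{f(\vx)-\vy}_p^p \le \E\norm{f(\vx)-f(\vy)}_p^p + \eps'$.
\end{itemize}
Contraction on support pairs then closes the loop:
\[
\E_{\vx,\vy\sim\mu}\norm{\vx-\vy}_p^p \le (1-\gamma)\,\E_{\vx,\vy\sim\mu}\norm{\vx-\vy}_p^p + 2\eps'.
\]
Hence $\E\norm{\vx-\vy}_p^p \le 2\eps'/\gamma$ and $\E\norm{f(\vx)-\vy}_p^p \le 4\eps'/\gamma$. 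Combining $\norm{\vx-f(\vx)}_p \le \norm{\vx-\vy}_p+\norm{f(\vx)-\vy}_p$ with $(a+b)^p\le 2^{p-1}(a^p+b^p)$ yields a support point with residual at most $2(3\eps'/\gamma)^{1/p}$. This is why the paper takes $\eps'=\gamma(\eps/4)^p/3$ rather than the value from \Cref{th:contraction}. The rest of your plan then goes through unchanged.
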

\begin{proof}
  As before, we may WLOG set $f \gets f_\gamma$ where $f_\gamma$ is as in \Cref{fact:nonexp}; in particular, $f_\gamma$ is $(1-\gamma)$-contractive if $f$ is nonexpansive, and a violation of $(1-\gamma)$-contractivity of $f_\gamma$ immediately gives a violation of nonexpansiveness of $f$.
  \Cref{th:gefp} can be straightforwardly modified to return a feasibility violation if it finds one. The main task in this proof is to circumvent the need to refer, in the proof of \Cref{th:contraction}, to know the actual fixed point $\vx^*$. Hence, let $\mu$ be an $\eps'$-approximate $\cG(m,L)$-expected fixed point of $f$, with $m$ and $L$ chosen as per \Cref{prop:m is nice}, and assume that such $\mu$ was found without any promise violations. We first check whether any pair $\vx, \vy \in \supp \mu$ violates the contraction property; if so, we output it. It remains to show that, otherwise, $\supp\mu$ must contain an approximate fixed point. 

  First, let $Q : \cX \to \R$ be any convex function with $\grad Q \in \cG(m,L)$. From the proof of \Cref{th:contraction}, we have
  \begin{align}
        \E_{\vx\sim\mu} [Q(\vx) - Q(f(\vx))] \le \eps'.
    \end{align}
  In particular, if $Q(\vx) := \norm{\vx - \vx^*}_p^p$, then this reduces to
  \begin{align}
    \E_{\vx\sim\mu} \norm{\vx - \vx^*}_p^p \le \E_{\vx\sim\mu} \norm{f(\vx) - \vx^*}_p^p + \eps' \label{eq:sfp}
  \end{align}
  for every $\vx^* \in \cX$ (not necessarily the fixed point).
  Therefore,
  \begin{align}
    \E_{\vx, \vy \sim \mu} \norm{\vx - \vy}_p^p &\le \E_{\vx, \vy \sim \mu} \norm{f(\vx) - \vy}_p^p + \eps' 
    \\&\le \E_{\vx, \vy \sim \mu} \norm{f(\vx) - f(\vy)}_p^p + 2 \eps' \label{eq:identical}
    \\&\le (1 - \gamma) \E_{\vx, \vy \sim \mu} \norm{\vx - \vy}_p^p + 2 \eps'
  \end{align}
  where the first two inequalities follow from \eqref{eq:sfp}, and the last inequality is contractivity of $f$ on the support of $\mu$. Thus, 
  \begin{align}
    \E_{\vx, \vy \sim \mu} \norm{\vx - \vy}_p^p \le \frac{2\eps'}{\gamma} \qq{and} \E_{\vx, \vy \sim \mu} \norm{f(\vx) - \vy}_p^p \le (1 - \gamma) \E_{\vx, \vy \sim \mu} \norm{\vx - \vy}_p^p + 2 \eps' \le \frac{4\eps'}{\gamma}.
  \end{align}
  Therefore,
  \begin{align}
    \E_{\vx\sim\mu} \norm{\vx - f(\vx)}_p^p &\le \E_{\vx,\vy\sim\mu} \ab(\norm{\vx - \vy}_p + \norm{f(\vx) - \vy}_p)^p
    \\&\le 2^{p-1} \E_{\vx,\vy\sim\mu} \ab(\norm{\vx - \vy}_p^p + \norm{f(\vx) - \vy}_p^p) \\&\le 2^{p} \cdot \frac{3\eps'}{\gamma},
  \end{align}
  so there is $\vx \in \supp\mu$ with 
  \begin{align}
    \norm{\vx - f(\vx)}_p \le 2 \cdot \ab(\frac{3\eps'}{\gamma})^{1/p}.
  \end{align}
  Thus, again following \Cref{th:contraction}, taking $\gamma = \eps/(4R)$ and $\eps' = (\eps/4)^p \gamma/3$ completes the proof. 
\end{proof}

\section{Discussion, Conclusion, and Future Directions}

The complexity of computing fixed points of $\ell_p$-contractions beyond the case $p=2$ has been a long-standing open problem. We have made significant progress on this problem, by giving an efficient algorithm that works for all even integers $p$. Our techniques draw from recent work on computational game theory, especially semi-separation and ellipsoid against hope, techniques typically used to compute various notions of correlated equilibrium. 

\subsection{Possibility of generalization beyond even integers}

One immediate question is whether our techniques can also be used to give efficient algorithms for values of $p$ other than even integers. We believe that, if this were possible, it would require a significantly new insight. We now discuss a few simple-sounding paths toward generalization, and why they do not work.

\paragraph{Approximating the $\ell_\infty$-norm by an $\ell_p$-norm} One may attempt generalize our results to $p = \infty$ by simply taking $p$ so large that $\norm{\cdot}_p \approx \norm{\cdot}_\infty$, and then running our algorithm on the $\ell_p$-norm. This, however, does not work: the multiplicative error induced by this approximation is on the order of $d^{1/p}$, so for a $(1-\gamma)$-contraction under $\norm{\cdot}_\infty$ to remain nonexpansive under $\norm{\cdot}_p$, we need to take $p \gtrsim \log(d)/\gamma$, which is too large since our algorithms depend polynomially on $p$. 

\paragraph{Approximating the $\ell_p$-distances for other values of $p$ using few dimensions} Our main result essentially works because, when $p$ is an even integer, the set of functions $\cQ_p := \{ \vx \mapsto \norm{\vx - \vx^*}_p^p : \vx^* \in \cX \} \subset \R^\cX$ are low-degree polynomials. One may ask whether this is true, even approximately, for other values of $p$. The answer to this is negative. For example, even on domain $\cX = [-1, 1]$ and $p=1$, \citet{Bernstein14:Meilleure} showed that approximating the absolute value map to uniform error $\eps$ requires degree $r \gtrsim 1/\eps$, which again is not good enough since we desire $\polylog(1/\eps)$-time algorithms. Other values of $p$ appear to run into similar barriers due to the need to represent the absolute value function.

\paragraph{Kernel methods} Our method of introducing a feature map $m : \cX \to \R^k$ may remind the reader of reproducing kernel Hilbert spaces (RKHSs). While these are a powerful tool, and have indeed been used before for learning concepts related to semi-separation and correlated equilibrium (see \eg~\citet{Farina26:Efficient}), they seem difficult to apply to our setting. Indeed, one barrier is that we rely fundamentally on cutting-plane methods whose runtime depends polynomially on the dimension $k$ of the feature space, rendering computation problematic when $k$ is large or infinite.

\subsection{Semi-separation and the usefulness of promise violations}

Search problems that have promises are often formulated as {\em total problems} rather than explicit promise problems, in which promise violations are made explicit. Consider, for example, the total form (\Cref{th:contr total}) of our main result (\Cref{th:contraction}). One reason to do this is because {\em complexity classes} of search problems are traditionally defined as total search problems. However, to our knowledge, such promise-violation formulations have mostly been regarded as, essentially, formal necessities to satisfy the $\TFNP$ definition, and nothing more. 

In this language, one way of viewing semi-separation is that it makes {\em explicit, nontrivial} use of the promise violation: that is, \Cref{th:gefp}, even if its input satisfies all its promises, can make a call to \Cref{th:monotone} with an operator $G$ that may {\em not} be monotone, and subsequently {\em use} the monotonicity violation as part of the algorithm. Is this general principle or idea, of making nontrivial use of promise violations of $\TFNP$ problems, useful in other problems as well, beyond equilibrium computation in games and our application to contractions? We leave this to future work.

\subsection{Other future directions}

We also leave several other possible directions and questions for future research.

\begin{itemize}
  \item Our methods rely fundamentally on cutting-plane algorithms, namely, the ellipsoid method. Our main results actually require {\em two} nested layers of EAH: the first to compute the correct notion of an expected fixed point (\Cref{th:gefp}), and the inner layer within \Cref{th:gefp} to semi-separate over the class of monotone operators (\Cref{th:monotone}). A pressing question is how to make these algorithms more efficient in practice. One resolution would be to give a more efficient alternative to ellipsoid against hope itself: that would speed up both layers.
  \item What reductions are there among the problems of $\ell_p$-contraction, beyond the case of even integer $p$? For example, are all $\ell_p$-contractions for $p$ other than even integers computationally equivalent? Can any be reduced to (say) $\ell_4$-contraction, and thus be solved efficiently?
\end{itemize}

\section*{Acknowledgements}

C.D. was supported by a Simons Investigator Award, a Simons Collaboration on Algorithmic Fairness, ONR MURI grant N00014-25-1-2116, ONR grant N00014-25-1-2296. G.F. was supported in part by the National Science Foundation award CCF-2443068, the Office of Naval Research grant N000142512296, and an AI2050 Early Career Fellowship.

\bibliography{dairefs}

\appendix

\section{Details on \Cref{fig:intro}}
\label{sec:figure details}

\paragraph{Notation} For this section and this section only, $\norm{\cdot}_p$ on matrices refers to the operator $p$-norm.

The function $f$ in \Cref{fig:intro} is given by 
\begin{align}
    f(\vx) = \mM \op{tanh}(\vx) + \vb
\end{align}
where $\op{tanh}$ is meant coordinate-wise, and $\mM, \vb$ are
\begin{align}
    \mM = \begin{pmatrix*}[r]
0.44274663 & -0.34622668 & 0.05119011 \\
0.59488020 & -0.12626562 & 0.10124367 \\
-0.64535783 & 0.10505465 & 0.07918490
\end{pmatrix*}, \qq{} \vb = \begin{pmatrix*}[r]
0.3 \\
-0.1 \\
0.1
\end{pmatrix*}.      
\end{align}
It is straightforward to check by inspection of the rows of $\mM$ that $f([-1,1]^3) \subseteq [-1,1]^3$.

\begin{lemma}
    $f$ is not $\ell_2$-contractive.
\end{lemma}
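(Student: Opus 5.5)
The plan is to exhibit an explicit pair of points $\vx, \vy \in [-1,1]^3$ with $\norm{f(\vx) - f(\vy)}_2 > \norm{\vx - \vy}_2$. This shows $f$ is not even $\ell_2$-nonexpansive, and hence not $\ell_2$-contractive for any $\gamma > 0$.

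I would linearize at the origin and take $\vy = \vec 0$. Since $\op{tanh}(0) = 0$ and $\op{tanh}'(0) = 1$, the Jacobian of $f$ at $\vec 0$ is exactly $\mM$. So it suffices to show that the operator norm $\norm{\mM}_2$ exceeds $1$ and then perturb slightly along a direction $\vec v$ with $\norm{\mM \vec v}_2 > \norm{\vec v}_2$.

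A first check shows that no single coordinate direction works. The first column $\vec c_1$ of $\mM$ has $\norm{\vec c_1}_2^2 \approx 0.9665 < 1$. I would therefore mix in the second column $\vec c_2$. Numerically, $\norm{\vec c_2}_2^2 \approx 0.1468$ and $\ip{\vec c_1, \vec c_2} \approx -0.2962$. Taking $\vec v = (1, -\tfrac12, 0)$ gives
\begin{align}
\norm{\mM\vec v}_2^2 = \norm{\vec c_1}_2^2 - \ip{\vec c_1,\vec c_2} + \tfrac14\norm{\vec c_2}_2^2 \approx 1.2994 > 1.25 = \norm{\vec v}_2^2 .
\end{align}
This is a ratio of about $1.04$, so there is a margin of a few percent.

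Next, set $\vx = t\vec v$ for small $t > 0$. Then
\begin{align}
f(\vx) - f(\vec 0) = \mM\op{tanh}(t\vec v) = \mM\,\bigl(\op{tanh}(t), -\op{tanh}(t/2), 0\bigr).
\end{align}
Dividing by $t$, this quotient converges to $\mM\vec v$ as $t \to 0$. By continuity, for all sufficiently small $t$ we get $\norm{f(\vx) - f(\vec 0)}_2 / \norm{\vx}_2 > 1$.

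To make the argument fully explicit, I would then fix a concrete value such as $t = 0.1$ and evaluate directly. Here $\op{tanh}(0.1) \approx 0.0997$ and $\op{tanh}(0.05) \approx 0.04996$, so the relative error is of order $t^2/3 \approx 0.3\%$. This is well within the $4\%$ margin, which confirms the inequality for that specific pair.

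The main obstacle is purely numerical: the coordinate directions alone do not show expansion, so one must find a direction with enough margin and then verify that the $\op{tanh}$ nonlinearity does not eat up that margin. The choice $\vec v = (1, -\tfrac12, 0)$ handles the first part. For the second part, I would use the bound $s - s^3/3 \le \op{tanh}(s) \le s$ to control the perturbation rigorously.
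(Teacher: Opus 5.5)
Your proposal is correct and follows essentially the same route as the paper, which simply notes that $Df(\vec 0) = \mM$ and $\norm{\mM}_2 \approx 1.031 > 1$. You go a bit further by certifying $\norm{\mM}_2 > 1$ with the explicit vector $\vec v = (1, -\tfrac12, 0)$ and turning the derivative argument into an explicit expanding pair $(t\vec v, \vec 0)$ at $t = 0.1$, which the paper leaves implicit; your numerics check out (the squared ratio is about $1.039$ at $t\to 0$ and about $1.033$ at $t = 0.1$).
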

\begin{proof}
The Jacobian of $f$ is given by
\begin{align}\label{eq:example jacobian}
    Df(\vx) &= \mat M \op{diag}(\op{sech}^2(\vx)),
\end{align}
so     $Df(\vec 0) = \mat M$. But $\norm{\mat M}_2 \approx 1.031 > 1$, so $f$ is not $\ell_2$-contractive.
\end{proof}

\begin{lemma}
    $f$ is $\ell_4$-contractive, with $\ell_4$-Lipschitz constant at most $0.94$ and unique fixed point $\vx^* \approx (0.43375536, 0.11232478, -0.16477999)$.
\end{lemma}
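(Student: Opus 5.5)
The plan is to bound the $\ell_4$-Lipschitz constant of $f$ through its Jacobian. Since $\cX = [-1,1]^3$ is convex and $f$ is smooth, the mean value inequality in integral form gives
\begin{align}
  \norm{f(\vx) - f(\vy)}_4 \le \sup_{\vz \in [-1,1]^3} \norm{Df(\vz)}_{4\to4} \norm{\vx - \vy}_4 .
\end{align}
By \eqref{eq:example jacobian} we have $Df(\vz) = \mM \op{diag}(\op{sech}^2(\vz))$, and every diagonal entry lies in $(0,1]$. A diagonal matrix with entries in $[0,1]$ has operator $4$-norm at most $1$. Submultiplicativity then gives $\norm{Df(\vz)}_4 \le \norm{\mM}_4 \norm{\op{diag}(\op{sech}^2\vz)}_4 \le \norm{\mM}_4$. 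So it is enough to certify the single numerical bound $\norm{\mM}_4 \le 0.94$.

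This certification is the main obstacle, because the operator $4\to4$ norm is not in closed form in general. I would first try Riesz--Thorin interpolation between $p=2$ and $p=\infty$: $\norm{\mM}_4 \le \norm{\mM}_2^{1/2}\norm{\mM}_\infty^{1/2}$. However, $\norm{\mM}_2 \approx 1.031$ and the max row sum is about $0.84$, so the product is about $0.93$. That is borderline but plausibly enough; I would compute the exact row sums and the spectral norm to enough digits to check it.

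If that fails, I would fall back on a direct argument. The quantity $\norm{\mM\vx}_4^4$ is a quartic form, so one can maximize it on the unit $\ell_4$ sphere in $\R^3$. One option is an a priori Lipschitz bound combined with a fine grid, with the grid error absorbed by the slack below $0.94$. Another is a sum-of-squares certificate showing that $0.94^4\norm{\vx}_4^4 - \norm{\mM\vx}_4^4$ is nonnegative. Either gives a rigorous bound on this small computation.

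Given contractivity with constant $0.94<1$ and $f([-1,1]^3)\subseteq[-1,1]^3$ (already observed), Banach's fixed point theorem gives existence and uniqueness of the fixed point. For its stated approximate value, I would run Banach iteration from $\vec 0$ numerically and verify the residual $r = \norm{\hat\vx - f(\hat\vx)}_4$ at the reported point $\hat\vx$. Contractivity then gives $\norm{\hat\vx - \vx^*}_4 \le r/(1-0.94)$, which confirms the displayed digits.
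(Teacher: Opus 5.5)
Your proposal is correct and follows the paper's proof. The paper also bounds the Jacobian via Riesz--Thorin, using $\norm{\mM}_2 < 1.032$ and $\norm{\mM}_\infty < 0.841$ to get $\norm{Df(\vx)}_4^2 < 0.868$, hence a Lipschitz constant below $\sqrt{0.868} < 0.94$; it interpolates on $Df(\vx)$ directly, whereas you first drop the diagonal factor in the $4$-norm and interpolate on $\mM$, which is equivalent. The grid and SOS fallbacks are not needed since the margin is comfortable, and your residual bound $\norm{\hat\vx - \vx^*}_4 \le r/(1-0.94)$ is a slightly more rigorous version of the paper's ``direct computation'' of the fixed point.
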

\begin{proof}
    The unique fixed point can be checked by direct computation. Furthermore, one can directly compute $\norm{\mM}_2 < 1.032$ and $\norm{\mM}_\infty < 0.841$. Thus, by the Riesz-Thorin inequality, we have 
    \begin{align}
        \norm{Df(\vx)}_4^2 &\le \norm{Df(\vx)}_2 \norm{Df(\vx)}_\infty
        \le \underbrace{\norm{\mM}_2}_{< 1.032} \underbrace{\norm{\op{diag}(\op{sech}^2(\vx))}_2}_{\le 1} \underbrace{\norm{\mM}_\infty}_{< 0.841} \underbrace{\norm{\op{diag}(\op{sech}^2(\vx))}_\infty}_{\le 1} < 0.868
    \end{align}
    upon which taking square roots completes the proof of the lemma.
\end{proof}

To generate the figure, we ran the simplified algorithm for $\ell_p$-contractions described in \Cref{sec:lp-simplified}, for $p=4$. The algorithm, on every iteration $t$, generates a query point $\vx^{(t)} \in [-1, 1]^3$ by solving the VI associated with $\cG$; this is the point labeled with $t$ in the diagram. We say that a point $\vx^*$ is ``feasible'' at iteration $t$ if the gradient $G^{\vx^*} = \grad Q^{\vx^*}$ of the map $Q^{\vx^*}(\vx) := \norm{\vx - \vx^*}_4^4$ obeys all the linear cuts 
$$\ip{G(\vx^{(\tau)}), \vx^{(\tau)} - f(\vx^{(\tau)})} \le 0$$
for $\tau \le t$. The shaded region in the figure is the projection of the feasible region into the $xy$-plane. The feasible region was approximated by discretizing $[-1, 1]^3$ into a grid.

Notice that, while these cuts are linear in function space, they are nonlinear in $\vx^*$-space; hence, the feasible regions in the figure are in general nonconvex. Similarly, because the ellipsoid algorithm runs in function space and not $\vx^*$-space, the query point $\vx^{(t)}$ does not always lie in the shaded region, since not all feasible operators are actually of the form $\grad Q^{\vx^*}$. For example, the query point at Step 10 lies just outside the feasible set. 

\section{Omitted details from \Cref{sec:prelims}}\label{sec:proofs prelims}

\subsection{Proof of \Cref{th:eah}}
\begin{proof}
Run the ellipsoid algorithm (\eg, \citet{Grotschel93:Geometric}) on the following strong version of \eqref{eq:eah dual}.
\begin{align}
  \qq{find} \vy \in B_\ell(\cY, -\delta) \qq{such that} \ip{F(\vx), \vy} \ge \frac{\eps}{2} \qq{for all} \vx \in \cX. \tag{EAH-D}
\end{align}
where $\delta$ will be chosen later, and $B_\ell(\cY, -\delta)$ is the set of all points $\delta$-deep within $\cY$, that is, such that $B_\ell(\vy, \delta) \subseteq \cY$. Use initial ball $B_\ell(\vec 0, R)$. Notice that the semi-separation oracle provides a strong separation oracle for this dual (namely, run the semi-separation oracle with precision $\eps/4$). Therefore, after $\poly(\ell, \log(R, L, 1/\eps))$ iterations, the ellipsoid method concludes that \eqref{eq:eah dual} is infeasible, and outputs a certificate of this infeasibility. That is, ellipsoid outputs a distribution $\mu \in \Delta(\cX)$ such that 
\begin{align}
  \E_{\vx\sim\mu} \ip{F(\vx), \vy} \le \frac{\eps}{2} \qq{for all} \vy \in B_\ell(\cY, -\delta).
\end{align}
By standard convex ball arguments, since $B_\ell(\cY, -\delta)$ is nonempty for $\delta \le 1/R$, we have $\cY \subseteq B_\ell(B_\ell(\cY, -\delta), \tau)$ for $\tau = 2\delta R^2$. From this, and the bound $\norm{F(\vx)}_2 \le L$, it follows  that 
\begin{align}
  \E_{\vx\sim\mu} \ip{F(\vx), \vy} \le \frac{\eps}{2} + 2 \delta LR^2 \qq{for all} \vy \in \cY.
\end{align}
 Taking $\delta = \eps/(4LR^2)$ therefore yields the required bound. 

The support size bound follows from Carath\'eodory's theorem on convex hulls: there exists a distribution $\mu'$ of support at most $\ell+1$ such that $\E_{\vx\sim\mu'} F(\vx) = \E_{\vx\sim\mu} F(\vx)$, and such a distribution can be found efficiently by solving a linear program and taking an extreme-point solution.
\end{proof}
\subsection{Proof of \Cref{th:monotone}}

In this section, we give two different proofs of \Cref{th:monotone}.

\begin{proof}[First proof]
  \citet{Anagnostides26:Polynomial} give an algorithm that, in time $\poly(d, \log(R, L, 1/\eps))$, outputs either an $\eps$-approximate VI solution, or a distribution $\mu$ for which 
  \begin{align}
    \E_{\vx\sim\mu} \ip{G(\vx), \vx - \vy} < 0 \qq{for all} \vy \in \cX.
  \end{align}
In the former case we are done. In the latter case, taking $\bar\vx = \E_{\vx\sim\mu} \vx$, we have
    \begin{align}
        \E_{\vx\sim\mu} \ip{G(\vx) - G(\bar\vx), \vx - \bar\vx} < 0
    \end{align}
    so we can find a monotonicity violation by iterating over the support of $\mu$.
\end{proof}

\begin{proof}[Second proof]
  This more self-contained proof follows ideas from \citet{Nemirovski10:Accuracy}, but expressed in the language of this paper. First, consider the problem
  \begin{align}
    \qq{find} \mu \in \Delta(\cX) \qq{such that} \E_{\vx\sim\mu} \ip{G(\vx), \vx - \vy} \le \eps' \qq{for all} \vy \in \cX. \label{eq:evi}
  \end{align}
  where $\eps'$ will be chosen later.
  We will run EAH. The preconditions of \Cref{th:eah} are easily satisfied: $\cX$ is well-bounded, and semi-separation follows from the fact that a separation oracle on $\cX$ is given and setting $\vx = \vy$ is a good-enough response. Thus, by \Cref{th:eah}, there is a $\poly(d, \log(R, 1/\eps'))$-time algorithm that solves \eqref{eq:evi}. Let $\bar\vx$ be the expectation of $\mu$. Using the separation oracle, compute an $\eps'$-approximate solution $\vy \in \cX$ to the linear optimization problem
  \begin{align}
    \max_{\vy \in \cX} \ip{G(\bar\vx), \bar\vx - \vy}.
  \end{align} 
   Let $\vw = (1 - \eta) \bar\vx + \eta \vy$, for a step size $\eta$ to be chosen later. Let $\vx^{(1)}, \dots, \vx^{(T)}$ be the support of $\mu$. We claim that either $\bar\vx$ is an approximate VI solution, or one of the pairs $(\vw, \vx^{(t)})$ for $0 \le t \le T$ is a monotonicity violation, in which case we are immediately done because it is straightforward to search over all such pairs and output the violation or solution. We now prove the claim. Suppose that none of the pairs $(\vw, \vx^{(t)})$ is a monotonicity violation. Then we have
  \begin{align}
    \eta \cdot \ip{G(\vw), \bar\vx - \vy} = \ip{G(\vw), \bar\vx - \vw} = \E_{\vx\sim\mu} \ip{G(\vw), \vx - \vw} \le \E_{\vx\sim\mu} \ip{G(\vx), \vx - \vw} \le \eps'.
  \end{align}
  On the other hand, by $L$-Lipschitzness of $G$ and Cauchy-Schwarz, we have
  \begin{align}
    \ip{G(\bar\vx) - G(\vw), \bar\vx - \vy} \le L \cdot \norm{\bar\vx - \vw}_2 \cdot \norm{\bar\vx - \vy}_2 \le 4\eta LR^2.
  \end{align}
and therefore
\begin{align}
  \ip{G(\bar\vx), \bar\vx - \vy} \le 4\eta LR^2 + \frac{\eps'}{\eta} \le 4R\sqrt{L \eps'}
\end{align}
by choosing $\eta = \sqrt{\eps'/(4LR^2)}$. But since $\vy$ is an $\eps'$-approximate maximizer of $\ip{G(\bar\vx), \bar\vx - \vy}$, it follows that $\bar\vx$ is a $(4R\sqrt{L \eps'} + \eps' \le 8R\sqrt{L\eps'})$-approximate VI solution. Thus, taking $\eps' = \eps^2/(64R^2L)$ completes the proof.
\end{proof}

\section{Omitted details from \Cref{sec:main}}\label{sec:proofs main}

\subsection{Proof and generalization of \Cref{prop:m is nice}}

We start by proving \Cref{prop:m is nice}.
\begin{proof}
  The entries of $m(\vx)$ are bounded in absolute value by $R^{p}$, so $\norm{m(\vx)}_2 \le \sqrt{k} R^{p}$. For the Lipschitz constant, notice that each entry of $m(\vx)$ is a term $x_i^j$ for $i \in [d]$ and $j \in [p-1]$, which is $jR^{j-1}$-Lipschitz, so $m$ itself has Lipschitz constant at most $\sqrt{k} p R^p$. Finally, if $\hat\vx \in \cX$ then
  \begin{align}
    Q^{\hat\vx}(\vx) = \sum_{i=1}^d (x_i - \hat x_i)^p = \sum_{i=1}^d \sum_{j=0}^{p} \binom{p}{j} x_i^j (-\hat x_i)^{p-j}
  \end{align}
  is a polynomial in $\vx$ of degree $p$ and whose coefficients are all bounded by $ (2R)^p$, since $\binom{p}{j} \le 2^p$ and $|\hat x_i| \le R$. Thus, $\grad Q^{\hat\vx}(\vx)$ is monotone and has coefficients bounded by $p(2R)^p$, and therefore can be represented as $\mA m(\vx)$ for some $\mA$ with $\norm{\mA}_2 \le \sqrt{k}p (2R)^p$. 
\end{proof}

We now prove a version of \Cref{prop:m is nice} that accounts for other ``rotated'' norms, of the form $\norm{\vx} = \norm{\mP \vx}_p$.

\begin{proposition}\label{prop:m is nice with cross}
  Let $\cX \subseteq B_d(\vec 0, R)$, and $p, q$ be positive even integers with $q \le p$. Let  $m : \cX \to \R^k$ where $k = d^{O(p)}$ be given by all the moments of degree $p$ or smaller in the variables $x_i$. Then for $L = (2dR)^{O(p)}$, we have that $m$ is $L$-Lipschitz and $L$-bounded, and moreover for every $\hat\vx \in \cX$ and $\mP$ with $\norm{\mP}_2 \le 1$, we have $\grad Q^{\hat\vx, \mP} \in \cG(m,L)$, where $Q^{\hat\vx, \mP}(\vx) := \norm{\mP(\vx - \hat\vx)}_q^q$.
\end{proposition}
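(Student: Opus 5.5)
The plan mirrors the proof of \Cref{prop:m is nice}, with the moment vector now including all cross-monomials. Let $m(\vx)$ list every monomial $\vx^{\alpha} = \prod_i x_i^{\alpha_i}$ with $|\alpha| \le p$, constant $1$ and the linear monomials first, as in the convention of \Cref{sec:main}. The number of such monomials is $\binom{d+p}{p} = d^{O(p)}$, which gives the bound on $k$.

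\textbf{Step 1: boundedness and Lipschitzness of $m$.} Since $\norm{\vx}_2 \le R$, every coordinate satisfies $|x_i| \le R$, so each monomial has absolute value at most $\max(1,R)^p$. Hence $\norm{m(\vx)}_2 \le \sqrt{k}\,\max(1,R)^p = (2dR)^{O(p)}$. For the Lipschitz bound, the gradient of $\vx^\alpha$ has entries $\alpha_i \vx^{\alpha - e_i}$. Each such entry is bounded by $p\max(1,R)^{p-1}$, so each monomial is $\sqrt{d}\,p\max(1,R)^{p-1}$-Lipschitz. Summing over the $k$ coordinates shows that $m$ is $\sqrt{k d}\,p\max(1,R)^{p}$-Lipschitz. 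Both bounds are $(2dR)^{O(p)}$.

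\textbf{Step 2: $\grad Q^{\hat\vx,\mP}$ is linear in $m$ with bounded coefficients.} Write $\vz = \vx - \hat\vx$ and $\vec u = \mP \vz$. Then
\[
Q^{\hat\vx,\mP}(\vx) = \sum_j u_j^q, \qquad \grad Q^{\hat\vx,\mP}(\vx) = q\,\mP^\top \big(u_1^{q-1}, \dots, u_d^{q-1}\big).
\]
Each $u_j = \sum_i P_{ji}(x_i - \hat x_i)$ is an affine form. Its coefficients are bounded by $1$, since $\norm{\mP}_2 \le 1$ bounds every entry, and its constant term is bounded by $\sqrt d R$. Expanding $u_j^{q-1}$ multinomially gives a polynomial of degree $q-1 \le p-1$ in $\vx$. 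Its coefficients are bounded by roughly $(d+1)^{q-1}\max(1,\sqrt d R)^{q-1}$, because the multinomial coefficients sum to $(d+1)^{q-1}$. Multiplying by $q\mP^\top$ adds at most another factor of $qd$.

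It follows that every coordinate of $\grad Q^{\hat\vx,\mP}$ is a linear combination of entries of $m$ with coefficients $(2dR)^{O(p)}$. So $\grad Q^{\hat\vx,\mP} = \mA m$ with $\norm{\mA}_2 \le \sqrt{dk}\,(2dR)^{O(p)} = (2dR)^{O(p)}$. Choosing the constant in the exponent of $L$ large enough covers both this bound and the Step 1 bounds, as well as the standing assumption $L \gtrsim \sqrt d$.

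\textbf{Step 3: monotonicity.} The map $t \mapsto t^q$ is convex because $q$ is even. Composing with the affine map $\vx \mapsto \mP(\vx - \hat\vx)$ and summing preserves convexity, so $Q^{\hat\vx,\mP}$ is convex. The gradient of a differentiable convex function is monotone. Hence $\mA \in \cY(m,L)$, and $\grad Q^{\hat\vx,\mP} \in \cG(m,L)$.

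\textbf{Main obstacle.} The only step needing care is the coefficient bookkeeping in Step 2: checking that the multinomial expansion together with the mixing by $\mP^\top$ keeps all coefficients at $(2dR)^{O(p)}$. The counting argument above handles this with room to spare. Everything else is routine.
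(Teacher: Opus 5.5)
Your proof is correct and follows essentially the same route as the paper: bound the moment features entrywise by $\max(1,R)^{p}$ to get the $\sqrt{k}$-type norm and Lipschitz bounds, expand multinomially to bound every coefficient of $\grad Q^{\hat\vx,\mP}$ by $(2dR)^{O(p)}$, and turn that entrywise bound into $\norm{\mA}_2 \le (2dR)^{O(p)}$. The differences are cosmetic: you differentiate before expanding, picking up a factor $qd$ from $\mP^\top$, whereas the paper expands $Q$ itself and then picks up a factor $p$; and you spell out the convexity-implies-monotonicity step, which the paper leaves implicit.
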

\begin{proof}
  The norm $\norm{m(\vx)}_2$ and Lipschitz constant of $m$ are bounded by $\sqrt{k} R^p$ and $\sqrt{k} p R^p$ respectively, by the same logic as the above proof. Moreover, we have
  \begin{align}
    Q^{\hat\vx, \mP}(\vx) = \sum_{i=1}^d \ab(\sum_{j=1}^d P_{ij}(x_j - \hat x_j))^q
  \end{align}
  The sum inside the parentheses has $2d$ terms each with a coefficient bounded by $R$ (since $|\hat x_j| \le R$); hence, after taking the $q$th power, the result is a sum of $(2d)^q$ terms each of which is bounded by $R^q$, so all coefficients are bounded by $(2dR)^q$. Its gradient therefore has all coefficients bounded by $p(2dR)^q$ and hence norm bounded by $\sqrt{k} p(2dR)^p \le (2dR)^{O(p)}$, as desired.
\end{proof}